\documentclass[a4paper,12pt]{article}
\usepackage[utf8]{inputenc}
\usepackage{amssymb}
\usepackage{amsthm}
\usepackage{amsfonts}
\usepackage{color}
\usepackage{amsmath}
\usepackage{algorithm}
\usepackage{algorithmic}
\usepackage{bm}

\usepackage[font=small]{caption}
\usepackage{graphicx}
\usepackage[toc,page]{appendix}
\usepackage{dsfont}
\usepackage{setspace}
\usepackage{verbatim}
\usepackage{enumerate}
\usepackage[stable]{footmisc}
\usepackage{pdflscape}
\usepackage{lipsum}
\usepackage{xspace}
\usepackage[margin=1.25in]{geometry}
\usepackage[gen]{eurosym}
\usepackage{epstopdf}
\usepackage{subcaption, mwe}
\usepackage[numbers]{natbib}
\newtheorem{remark}{Remark}
\setlength{\columnsep}{\baselineskip} 
\setlength\parindent{20pt}
\newtheorem{lem}{Lemma}
\newtheorem{definition}{Definition}
\newtheorem{proposition}{Proposition}
\newtheorem{theorem}{Theorem}
\newtheorem*{theorem*}{Theorem}
\newtheorem*{note*}{Note}
\pagestyle{plain} 
\setlength{\parskip}{0.2em}

\begin{document}

\newpage	

\pagenumbering{arabic}
\title{\textbf{\Large Bounds on Multi-asset Derivatives\\
 via Neural Networks}}
		
\author{Luca De Gennaro Aquino\footnote{Corresponding author. Department of Accounting, Law and Finance, Grenoble Ecole de Management, F-38000 Grenoble, France. (Email: \texttt{luca.degennaroaquino@grenoble-em.com}). } \\        \and
       Carole Bernard\footnote{Department of Accounting, Law and Finance, Grenoble Ecole de Management, F-38000 Grenoble, and Department of Economics and Political Sciences at Vrije Universiteit Brussel (VUB).
             (Email: \texttt{carole.bernard@grenoble-em.com}).}
\footnote{We thank the participants of the Vienna Congress on Mathematical Finance in September 2019 and of the RiO 2019 annual meeting, as well as Lamya Kermiche, Fran\c{c}ois Desmoulins Lebeault, Stephan Eckstein and Michael Kupper for helpful comments on the first draft of the paper. }}
\date{\today}
\maketitle	
\vspace{2cm}

\begin{abstract}
 Using neural networks, we compute bounds on the prices of multi-asset derivatives given information on prices of related payoffs. As a main example, we focus on European basket options and include information on the prices of other similar options, such as spread options and/or basket options on subindices. We show that, in most cases, adding further constraints gives rise to bounds that are considerably tighter. Our approach follows the literature on constrained optimal transport and, in particular, builds on Eckstein and Kupper \cite{eckstein2018computation}.
\end{abstract}

\noindent Keywords: Neural networks; arbitrage bounds; multivariate options; optimal transport.

\newpage
\section{Introduction}
In this paper, we consider the problem of computing model-independent upper and lower bounds on the price of multi-asset derivatives, given prices on related payoffs. We  devote our attention mainly to European basket options, although our methodology can be applied also to other types of multivariate options.

Basket options are options written on a linear combination of some underlying assets. For instance, the payoff for a call is given by the positive difference between the weighted sum of the prices of the different assets in the basket and the exercise price. 
Characterizing closed form formulas for the price of basket options (and in general multi-asset derivatives) is a complicated task. The difficulty stems from the unavailability of the distribution for the weighted sum of the (possibly correlated) underlying assets. For instance, in the higher-dimensional Black-Scholes setting, this would require the distribution of the sum of correlated log-normals, which is not available in explicit form. Within this framework, approximate or partially explicit formulas were obtained by Carmona and Durrleman \cite{carmona2005generalizing} and Deelstra et al. \cite{deelstra2004pricing}. Despite this complication, different approaches for pricing basket options have been provided in the literature, so we refer the interested readers to the papers by Milevsky and Posner \cite{milevsky1998closed}, Ju \cite{ju2002pricing}, Brigo et al. \cite{brigo2004approximated}, Borovkova et al. \cite{borovkova2007closed}, Linders and Stassen \cite{linders2016multivariate}. A review of some of these approaches can be found in the chapter by Krekel et al. \cite{krekel2006analysis}.

The wide range of quantitative models that have arised in the financial literature and are used for pricing and hedging purposes (stochastic volatility models, local volatility models, Lévy-type models and so on) has given rise to the so-called ``model uncertainty", that is, the uncertainty on the choice of the model itself and its impact on the pricing of derivative instruments. A thorough scrutiny of this topic can be found in Cont \cite{cont2006model}. In his paper, the author overviews some of the tools that have been used in risk management for mitigating model risk (mainly, Bayesian model averaging (Hoeting et al. \cite{hoeting1999bayesian}) and \textit{maxmin}-type approaches (Epstein and Wang \cite{epstein1995uncertainty})), besides proposing two new frameworks based on coherent and convex risk measures.

With this scenario in mind, we are concerned with an integrative strategy for assessing plausibility of prices and hedges of derivative payoffs obtained via some parametric model, that is, computing non-parametric (i.e. model-independent) upper and lower bounds for these quantities. This complementary approach does not rely on any assumption on the dynamic of the underlying assets, but rather considers the class of all models that are consistent with observed market data.  
Previous works that addressed this problem for path-independent multi-asset options include the papers by Hobson et al. \cite{hobson2005static2, hobson2005static}, Laurence and Wang \cite{laurence2004s, laurence2005sharp}, d'Aspremont and El Ghaoui \cite{d2006static}, Chen et al. \cite{chen2008static} and Pe{\~{n}}a et al. \cite{pena2010static}. In all  of these papers, the computation of static-arbitrage bounds typically required the solution of some infinite-dimensional linear programming problem. We also mention Hobson \cite{hobson2011skorokhod}, which pointed out and discussed the relation of this problem to the Skorokhod embedding problem, and Caldana et al. \cite{caldana2016general}, who found basket options pricing bounds for a general class of continuous-time models assuming knowledge of the joint characteristic function of the log-returns of the assets.  For path-dependent options, Beiglb\"{o}ck et al. \cite{beiglbock2013model} and Galichon et al. \cite{galichon2014stochastic} obtained robust bounds by establishing a dual version of the problem using arguments from the theory of martingale optimal transport. This dual formulation ended up being intimately linked to the construction of a semi-static hedging portfolio.

For other applications in risk management under uncertainty, we refer to the book by McNeil et al. \cite{mcneil2005quantitative} and, among all, to the papers by Embrechts et al. \cite{embrechts2013model}, Bernard et al. \cite{bernard2014risk}, Bernard et al. \cite{bernard2020model} and references therein.

In this paper, while keeping an optimal transport setting, we aim to compute numerically bounds on multi-asset options by means of neural networks, an approach which has been initially developed by \cite{eckstein2018computation}. In the spirit of d'Aspremont and El Ghaoui \cite{d2006static}, we aim at tightening existing bounds by including knowledge on the prices of other multivariate derivatives written on the same assets, beyond the information on the marginals. Our setting is also closely related to the papers by Eckstein et al. \cite{eckstein2019robust}, who obtained bounds on multi-asset options by including prices on single-asset options at different maturities, and L\"{u}tkebohmert and Sester \cite{lutkebohmert2019tightening}, who included information on the covariance or correlation between assets. Also, more recently, Neufeld et al. \cite{neufeld2020model} reformulated the problem as a (non-transport) linear semi-infinite programming problem and provided cutting-plane algorithms for model-free bounds using option-implied information.

The main contribution of this paper is to show that neural networks allow to solve for model uncertainty on prices of multi-asset derivatives in very general and possibly higher-dimensional settings, and to illustrate the methodology with specific basket and spread options.

We also extend the work of Tavin \cite{tavin2015detection}, whose idea is to parametrise the set of copulas using a dense parametric family of copulas (the Bernstein copulas) and derive bounds on prices of multi-asset derivatives using this family of copulas constrained to match a certain number of constraints (existing prices). In our paper, the algorithm seeks a copula in a non-parametric way and without specifying from which family it must be derived. It is thus a more general approach than Tavin's, but allows to recover his results.

As a further note, an important application of our study is the detection of arbitrage opportunities in the market. In practice, a certain number of multi-asset derivatives are priced. A prime concern for a trader who wants to introduce a new multi-asset derivative in the market is to not introduce arbitrage. To do so, one needs to determine bounds on the price of the multi-asset derivative that take into account all the existing information. Should a derivative in the market have a price outside its arbitrage bounds, this would mean that such derivative is mispriced. Detecting the arbitrage is relatively straightforward using our methodology. However, taking advantage of this arbitrage may not always be straightforward as multi-asset derivatives are typically hard to replicate. 

This paper builds on two strands of literature, i.e., optimal transport (denoted hereafter by OT) and artificial neural networks. OT dates back to the seminal work in 1781 by French geometer Monge \cite{monge1781memoire}. Although it was initially formalised as a problem of cost-efficient transportation of mineral resources, subsequently it has been extensively applied in numerous fields and under different settings, from economics to quantum physics. In the classic two-dimensional case, considering some cost function $c(x,y)$ on $\mathcal{X} \times \mathcal{Y}$, $\mathcal{X}$ and $\mathcal{Y}$ representing two suitable spaces, the problem can be simply stated in terms of the following minimization:
\begin{equation} \label{OTgen}
\inf_{\pi \in \Pi(\mu_{1}, \mu_{2})}\int_{\mathcal{X} \times \mathcal{Y}}c(x,y)\, d\pi(x,y),
\end{equation}
where $\Pi(\mu_{1}, \mu_{2})$ is the set of all joint probability measures with given marginals $\mu_{1}$ and $\mu_2$ on $\mathcal{X}$ and $\mathcal{Y}$, respectively. Such joint measures are called transportation plans and, trivially, those achieving the infimum are called optimal transportation plans. We will see in Section \ref{setting} how the problem of computing bounds on derivative payoffs can be framed in similar terms. For a detailed treatment of optimal transport theory, well-known reference books include Rachev and R\"{u}schendorf \cite{rachev1998mass, rachev2006mass} and Villani \cite{villani2003topics, villani2008optimal}. For an overview of computational methods for OT problems, see instead Peyré and Cuturi \cite{peyre2019computational}. In Section \ref{MK with constraints}, we  review some results by Zaev \cite{zaev2015monge} on a (linearly) constrained version of the classic Monge-Kantorovich duality, which are a key ingredient to justify the approach in Sections \ref{NeuralNetworks}-\ref{applications}. Interestingly, this framework relates to the martingale version of the Monge-Kantorovich problem, for which we refer to the papers by Hobson et al. \cite{hobson2011skorokhod} and Beiglb\"{o}ck et al. \cite{beiglbock2013model, beiglbock2016problem, beiglbock2017complete}. 

On the other hand, artificial neural networks represent one of the most popular algorithms in machine learning. Due to their flexibility and capability of performing nonlinear modeling without a priori assumptions or model specifications, they have been widely applied in the last years on a plethora of tasks, from computer vision to speech recognition as well as in  econometrics and mathematical finance. In fact, neural networks have been shown to exhibit universal approximation properties (Hornik et al. \cite{hornik1989multilayer}, Hornik \cite{hornik1991approximation, hornik1993some}), which make them suitable for approximating any continuous functions with arbitrary accuracy. We  discuss further on this point in Section \ref{NeuralNetworks}. For a review of applications of neural networks specifically on option pricing and hedging, see Ruf and Wang \cite{ruf2019neural}. Among others, promising applications have been proposed by Buehler et al. \cite{buehler2019deep}, in which the authors introduced a reinforcement learning framework for hedging derivatives in markets with frictions, by  Lu \cite{lu2017agent} and Du et al. \cite{du2016algorithm}, who implemented algorithmic trading strategies using recurrent neural networks, and by Sirignano and Cont \cite{sirignano2019universal}, who performed a large-scale analysis of the mechanism of price formation using high-frequency data.

The paper is organised as follows. In Section \ref{setting}, we  provide the setting and define the problem of computing bounds on basket options in terms of optimal transport (OT). In Section \ref{MK with constraints},  we  review some results by Zaev \cite{zaev2015monge} on OT theory with additional linear constraints. In Section \ref{NeuralNetworks}, we  introduce the approach in Eckstein and Kupper \cite{eckstein2018computation}  and give some background on neural networks. In Section \ref{applications}, we  present some applications on basket options. Some proofs and complementary numerical results are postponed to the appendices.

\section{Setting} \label{setting}
 Consider a one-period financial market consisting of $d \geq 2$ risky assets, modeled as a $\mathbb{R}^{d}_{+}$-valued random vector $( S_{1}(t),\dots,S_{d}(t) )_{t \in \{0,T\}}$ on a probability space $\left( \Omega, \mathcal{F}, \mathbb{P} \right)$. Since we essentially deal with path-independent derivatives, we henceforth omit the reference to time in the notation.\footnote{In Section \ref{PaDe}, we illustrate how the methodology also applies on path-dependent derivatives and reintroduce the time notation at this point. } We assume that the financial market does not allow for arbitrage opportunities, so that the existence of at least one risk-neutral probability measure $\mathbb{Q}$ equivalent to the real-world probability measure $\mathbb{P}$ is guaranteed.  
Let us denote by $C(K, (\alpha_{i})_{i \in J}  )$ the payoff of a basket call option with strike $K \in\mathbb{R} $ and weights $(\alpha_{i})_{ i \in J}$, $\alpha_{i} \neq 0,$ on $ J \subseteq I = \{1,2,\dots,d \}$ assets in our market. Such payoff is given by:
\begin{equation*}
C(K, (\alpha_{i})_{i \in J}  ) = \left( \sum_{i \in J}\alpha_{i}S_{i} - K \right)^{+}.
\end{equation*}    

Notice that when the weights are allowed to be negative these payoffs also include spread options. Also, when the basket only concerns one asset, the basket option reduces to a standard European option written on a single underlying.
Furthermore, we assume that options on a single asset are priced for each strike $K$, so that the distribution of the underlying is fully characterised (Breeden and Litzenberger \cite{breeden1978prices}). In other words, for all $i \in I$, we will assume $X_i \sim \mu_{i}$, for some distributions $\mu_{i}$. 

Let us now define the following system of subindices:
$(\bm{J}^{1},\bm{\alpha}^{1}),\dots,(\bm{J}^{n},\bm{\alpha}^{n})$,  where $\bm{J}^{k} := \{j_{1},\dots,j_{m_{k}}\} \subset \{1,\dots,d\}$ and $\bm{\alpha}^{k} := \left( \alpha_{1}^{k},\dots,\alpha_{m_{k}}^{k} \right)$. Here $n$ represents the number of subindices that we want to consider and $ 1 \leq m_{k} \leq d, k = 1,\dots,n$, the number of assets included in the $k$-th subindex. Also, denote as $\bm{S}^{(\bm{J}^{k},\bm{\alpha}^{k})}$ the linear combination of the $m_{k}$ assets in the $k$-th subindex with weights $\bm{\alpha}^{k}$:
\begin{equation*}
\bm{S}^{(\bm{J}^{k},\bm{\alpha}^{k})} := \sum_{j \in \bm{J}^{k}}\alpha_{j}^{k}S_{j}
, \quad k = 1,\dots,n.
\end{equation*}
\subsection{Formulation of the problem}
Assuming that we know the prices $(p_{k})_{k=1,\dots,n}$ of the call options on each subindex, we want to solve the problem of computing upper and lower bounds on the price of a European basket option written on $(S_{i})_{ i \in J}$, with maturity $T$, strike $K$ and weights $(\alpha_{i})_{ i \in J}$. For the upper bound on the basket call, this problem can be written as:
\begin{equation} \label{supBasket}
\begin{split}
\sup_{\pi \in \mathcal{Q}}\;  & \mathbb{E}_{\pi}\left[ \left( \sum_{i \in J}\alpha_{i}S_{i} - K \right)^{+} \right],\\
\mbox{subject to} \;  & S_{i} \sim \mu_{i}, \quad i \in J \\
& \mathbb{E}_{\pi}\left[\left(\bm{S}^{(\bm{J}^{k},\bm{\alpha}^{k})} -K_{k} \right)^{+}\right] = p_{k}, \quad k = 1,\dots,n,
\end{split}
\end{equation}
where the supremum runs over all probability measures $\pi$ with margins $\mu_{i}, i \in J$, in a class of probability measures $\mathcal{Q}$ that are consistent with the given set of observed prices $p_{k}$. Equivalently, the problem of finding the lower bound on a basket call option can be written as:
\begin{equation} \label{infBasket}
\begin{split}
\inf_{\pi \in \mathcal{Q}} \; & \mathbb{E}_{\pi}\left[ \left( \sum_{i \in J}\alpha_{i}S_{i} - K \right)^{+} \right],\\
\mbox{subject to:} \; \; & S_{i} \sim \mu_{i}, \quad i \in J \\
&  \mathbb{E}_{\pi}\left[\left(\bm{S}^{(\bm{J}^{k},\bm{\alpha}^{k})} -K_{k} \right)^{+}\right] = p_{k}, \quad k = 1,\dots,n.
\end{split}
\end{equation}
Notice that, for now, we consider that all the options have the same maturity. Also, for the ease of presentation, we include information only on call options on the subindices, but we could have also included information on put options or other payoffs, which only depend on the distribution of the assets at time $T$. Furthermore, the problem is stated with the price of a basket call. Other options payoffs, including put options, can be studied similarly. As it is stated, this problem is similar to the problem in Eq. (1) in d'Aspremont and El Ghaoui \cite{d2006static}, which the authors solved via linear programming or, for some special cases, in closed-forms. 

In this paper, we  take a  different approach, which follows from the  simple observation that problems  (\ref{supBasket}) and (\ref{infBasket}) can be represented as a higher-dimensional version of the optimal transport problem in (\ref{OTgen}), given additional information (in the form of constraints). 
In very general terms, defining as $f$ the payoff of a path-independent derivative and as $(w_{k})_{k = 1,\dots,n}$ the payoffs of the options whose prices are assumed to be known (and equal to $(p_{k})_{k = 1,\dots,n}$), the supremum in (\ref{supBasket}) can be written as follows:
\begin{equation} \label{constrainedOTgen}
\begin{split}
 \sup_{\pi \in \mathcal{Q}} \;& \int f \,d\pi, \\
\mbox{subject to:}\; \;  & ({\Pr}_{i})_{\#}\pi = \mu_{i}, \quad i \in J, \\
& \int w_{k}d\pi = p_{k}, \quad  k=1,\dots,n,
\end{split}
\end{equation}
where $ ({\Pr}_{i})_{\#}\pi$ can be interpreted as the $i$-th marginal of $\pi$. We  define all these ingredients more rigorously in the following section. 
With obvious modifications, a similar equivalent representation holds for the infimum in (\ref{infBasket}).

In the next section, we temporarily abstract ourselves from financial applications and review some results on the problem in (\ref{constrainedOTgen}). In particular, we are interested in obtaining its dual formulation and proving absence of duality gap.
\section{Optimal transport with linear constraints} \label{MK with constraints}
In this section, we recall the optimal transport problem with linear constraints in Zaev \cite{zaev2015monge} in order to ultimately solve the constrained optimization problem \eqref{constrainedOTgen}. All proofs and background results are postponed to Appendix \ref{appendix_Kantorovic}.

Let $\mathcal{X}_{1}, \dots,  \mathcal{X}_{n}$ be Polish spaces with respective Borel $\sigma$-algebras $\mathcal{B}(\mathcal{X}_{i})$, for $i = 1, \dots, n$, and  $\mathcal{X} = \mathcal{X}_{1} \times \cdots \times \mathcal{X}_{n}$. Consider $\mu_{1},\dots,\mu_{n}$ fixed probability measures on $\mathcal{X}_{1}, \dots,  \mathcal{X}_{n}$, with $\mu = \left(\mu_{1},\dots,\mu_{n} \right) $ being a $n$-tuple of such measures. We denote by $\mathcal{P}(\mathcal{X})$ the set of Borel probability measures over $\mathcal{X}$ and by $\Pi(\mu)$ the set of measures on $\mathcal{X}$ with given marginals. Both sets are equipped with the topology of weak convergence.
Let us introduce the functional spaces
\begin{equation*}
	C_{L}(\mu_{i}) := \left\{ h \in L^{1}(\mathcal{X}_{i},\mu_{i}) \cap C(\mathcal{X}_{i})  \right\}
\end{equation*} 
of continuous absolutely integrable functions on $\mathcal{X}_{i}$ (denoted by $C(\mathcal{X}_{i})$) with topology induced by $L^{1}(\mathcal{X}_{i},\mu_{i})$ norm, and $C_{L}(\mu)$ as a subset of the continuous absolutely integrable functions on $\mathcal{X}$, 
\begin{equation*}
\begin{split}
& C_{L}(\mu) := \left\{ c \in C(\mathcal{X}): \exists \, h \in \mathcal{H} \quad \mbox{s.t.} \; \vert c \vert \leq h \right\}, \\
& \mathcal{H} = \left\{ h \in C(\mathcal{X}): h(x_1, \dots, x_n) = \sum_{i=1}^{n}h_i(x_i) \, \mbox{for all } (x_1, \dots,x_n) \in \mathcal{X} \, \mbox{and } h_i \in C_L(\mu_i)  \right\}.
\end{split}	
\end{equation*}
We equip $C_{L}(\mu)$ with the following seminorm:
\begin{equation} \label{seminorm}
\Vert c \Vert_{L} := \sup_{\pi \in \Pi(\mu)}\int_{\mathcal{X}} \vert c \vert \, d\pi.
\end{equation}
The proof that $\Vert \cdot \Vert_{L}$ is a well-defined seminorm is reported in Appendix \ref{well-definiteness}.

Also, let us fix an arbitrary subspace $W \subset C_{L}(\mu)$. 
Our objective is to solve the following (constrained) Monge-Kantorovich optimization problem:
\begin{equation}
\inf_{\pi \in \Pi(\mu)} \left\{\int_{\mathcal{X}} f \, d\pi : \int_{\mathcal{X}} w \, d\pi = 0 \quad \forall w \in W    \right\},
\end{equation}
for some function  $C_{L}(\mu) \ni f: \mathcal{X} \to \mathbb{R}$. By denoting as 
\begin{equation} \label{Pi_w}
\Pi_{W}(\mu) := \left\{ \pi \in \mathcal{P}(\mathcal{X}): \int_{\mathcal{X}} w \, d\pi = 0 \quad \forall w \in W, \; {\Pr}_{\#}\pi = \mu \right\}
\end{equation}
the set of optimal transport plans $\pi$ such that $\pi\vert_{W}=0$, with $\Pr_{\#}\pi = \mu$ denoting the natural projection of $\pi$ from $\mathcal{X}$ on the tuple of spaces $(\mathcal{X}_{1}, \dots,  \mathcal{X}_{n})$,  we can formulate the problem in the following more compact way:
\begin{equation}
\inf_{\pi \in \Pi_{W}(\mu)} \int_{\mathcal{X}} f \, d\pi  .
\end{equation}
Notice that, unlike the setting of optimal transportation without additional contraints, the set $\Pi_{W}(\mu)$ is not guaranteed to be non-empty. If we assume that such set is indeed non-empty, by Theorem 4.1 in Villani \cite{villani2008optimal}, the existence of an optimal transport plan follows from: (a) compactness of $\Pi_{W}(\mu)$, (b) lower semi-continuity of the functional $\pi \to \int_{\mathcal{X}}c \, d\pi$. 
These two properties are assessed in Appendix \ref{compactness-continuity}.

We have now all the elements to state the following generalization of the well-known Monge-Kantorovich duality for the case with additional linear constraints. The full proof is contained in Appendices \ref{lemmas-appendix}-\ref{proof-Kantorovich-constraints}.
\begin{theorem} \label{constrained Kantorovich duality}
Let $\mathcal{X}_{1}, \dots,  \mathcal{X}_{n}$ be Polish spaces, $\mathcal{X} = \mathcal{X}_{1} \times \cdots \times  \mathcal{X}_{n}$, $\mu = (\mu_{i} \in \mathcal{P}(\mathcal{X}_{i}))_{i = 1, \dots, n}$. Also, let $W$ be a subspace of $C_{L}(\mu)$ and $f \in C_{L}(\mu)$. Then,
\begin{equation*}
\inf_{\pi \in \Pi_{W}(\mu)} \int_{\mathcal{X}} f \, d\pi = \sup_{\substack{h + w \leq f \\ h \in \mathcal{H}, \, w \in W}}\sum_{i=1}^{n}\int_{\mathcal{X}_{i}}h_{i}\, d\mu_{i},
\end{equation*}
where $ h(x_1, \dots, x_n) := \sum_{i=1}^{n}h_{i}(x_i)$.
\end{theorem}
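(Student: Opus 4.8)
The plan is to establish the two inequalities separately: the trivial ``weak duality'' direction by a direct computation, and the reverse ``no duality gap'' direction by a Fenchel--Rockafellar argument on the space $C_L(\mu)$.

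For weak duality, fix any $\pi \in \Pi_W(\mu)$ and any admissible pair $h \in \mathcal{H}$, $w \in W$ with $h + w \leq f$. Integrating against $\pi$ and using that $\int_{\mathcal{X}} w \, d\pi = 0$ (by definition of $\Pi_W(\mu)$) together with the marginal identity $\int_{\mathcal{X}} h \, d\pi = \sum_{i=1}^n \int_{\mathcal{X}_i} h_i \, d\mu_i$, I obtain
\begin{equation*}
\int_{\mathcal{X}} f \, d\pi \;\geq\; \int_{\mathcal{X}} (h+w)\, d\pi \;=\; \sum_{i=1}^n \int_{\mathcal{X}_i} h_i \, d\mu_i .
\end{equation*}
Taking the infimum over $\pi$ on the left and the supremum over $(h,w)$ on the right yields ``$\geq$'' in the claimed equality.

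For the reverse inequality I would set $E := C_L(\mu)$ and introduce two convex functionals. Let $\Theta(u) := 0$ if $u \leq f$ pointwise on $\mathcal{X}$ and $\Theta(u) := +\infty$ otherwise, and let $\Xi(u) := -\sum_{i=1}^n \int_{\mathcal{X}_i} h_i \, d\mu_i$ whenever $u$ admits a decomposition $u = h + w$ with $h \in \mathcal{H}$ and $w \in W$, and $\Xi(u) := +\infty$ otherwise. A first point to check is that $\Xi$ is well defined: if $u = h + w = h' + w'$ are two such decompositions, then, since $\Pi_W(\mu) \neq \emptyset$ is assumed, testing against any $\pi \in \Pi_W(\mu)$ gives $\sum_i \int h_i \, d\mu_i = \int u \, d\pi = \sum_i \int h_i' \, d\mu_i$, so the value is independent of the decomposition. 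With these choices, $\inf_{u \in E}[\Theta(u) + \Xi(u)]$ equals minus the right-hand side of the theorem, and the plan is to evaluate it through the Fenchel--Rockafellar identity $\inf_u[\Theta(u)+\Xi(u)] = \max_{\ell \in E^*}[-\Theta^*(-\ell) - \Xi^*(\ell)]$. Computing the two conjugates is the heart of the matter: representing a candidate dual element $\ell \in E^*$ by a signed measure $\sigma$ and writing $\pi := -\sigma$, the conjugate of $\Theta$ forces $\pi \geq 0$ and returns $\Theta^*(-\ell) = \int_{\mathcal{X}} f \, d\pi$, while the conjugate of $\Xi$ is finite (and equal to $0$) only when $\int_{\mathcal{X}} w \, d\pi = 0$ for all $w \in W$ and $({\Pr}_i)_{\#}\pi = \mu_i$ for every $i$. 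These constraints say precisely that $\pi \in \Pi_W(\mu)$ (and $\pi$ is automatically a probability measure, since its marginals are), so the right-hand side of the Fenchel--Rockafellar identity becomes $\max_{\pi \in \Pi_W(\mu)}(-\int_{\mathcal{X}} f \, d\pi) = -\inf_{\pi \in \Pi_W(\mu)}\int_{\mathcal{X}} f \, d\pi$, which is minus the left-hand side of the theorem. Matching the two expressions gives the desired equality.

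The main obstacle is analytic rather than algebraic. Two ingredients must be supplied to make the scheme rigorous: a Riesz-type representation identifying functionals in $E^* = C_L(\mu)^*$ with (signed) measures, and verification of the constraint qualification of Fenchel--Rockafellar, namely continuity of one functional at a common feasible point. Here the weakness of the seminorm $\Vert \cdot \Vert_L$ is the difficulty: since $\Vert u - u_0 \Vert_L$ controls only a coupling-averaged deviation, small $\Vert \cdot \Vert_L$-balls need not lie inside $\{u \leq f\}$, so $\Theta$ is not obviously continuous and the indicator construction does not immediately qualify. I expect the resolution, carried out through the lemmas in Appendices \ref{lemmas-appendix}--\ref{proof-Kantorovich-constraints}, to first establish the duality for bounded continuous cost $f$ (where the supremum norm makes $\Theta$ continuous and the representation classical) and then to extend it to a general $f \in C_L(\mu)$ by truncation and approximation, with convergence of both sides controlled exactly by the seminorm $\Vert \cdot \Vert_L$ introduced for this purpose.
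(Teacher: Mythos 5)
Your weak-duality half is correct and is essentially identical to the first step of the paper's proof. The problem lies entirely in the reverse inequality, and there your Fenchel--Rockafellar scheme has a gap that I do not think survives the repair you propose. You rightly note that on $(C_L(\mu), \Vert \cdot \Vert_L)$ the indicator functional $\Theta$ of $\{u \leq f\}$ has no interior points, so the constraint qualification fails; your fix is to run the argument on bounded continuous functions with the sup norm and then recover general $f$ by truncation. Two objections. First, for non-compact Polish $\mathcal{X}$ the dual of $(C_b(\mathcal{X}), \Vert\cdot\Vert_\infty)$ consists of finitely additive set functions, so the ``classical'' representation you invoke is not available; turning the dual optimizer into a genuine (countably additive) transport plan requires an extra tightness or Stone--\v{C}ech argument, which is precisely the hard content (it is what the paper's Lemma \ref{Kantorovich duality} spends most of its proof on). This part is repairable. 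Second, and fatally for the sketch as written: on $C_b(\mathcal{X})$ the conjugate $\Xi^*$ can only test decompositions $u = h + w$ whose \emph{sum} is bounded, so it enforces $\int_{\mathcal{X}} w \, d\pi = 0$ only for those $w \in W$ that can be compensated to a bounded function by some $h \in \mathcal{H}$. The constraints of interest in this paper are unbounded and non-compensable: for $w(x,y) = (x+y-K)^+ - p$, if $(x+y-K)^+ + h_1(x) + h_2(y)$ were bounded, then fixing $y$ and letting $x \to +\infty$ forces $h_1(x) = -x + O(1)$, symmetrically $h_2(y) = -y + O(1)$, and evaluating along $x = t$, $y = -t$ gives a contradiction. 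So your bounded-case duality is a duality for $\Pi_{W \cap C_b}(\mu)$, not $\Pi_W(\mu)$, and truncating $f$ afterwards cannot restore the lost constraints; truncating the $w$'s is not an option either, since truncation is nonlinear and exits the subspace $W$.

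For comparison, the paper never feeds $W$ into any functional-analytic machinery, which is how it sidesteps exactly this issue. It rewrites the dual as $\sup_{w \in W}\,\sup_{h \leq f - w}$, applies the \emph{unconstrained} Kantorovich duality (Lemma \ref{Kantorovich duality}, proved via Hahn--Banach and Stone--\v{C}ech) to the shifted cost $f - w$ for each fixed $w$, and then invokes Sion's minimax theorem (Lemma \ref{minimax}, legitimate because $\Pi(\mu)$ is convex and weakly compact by Prokhorov, and the pairing $(\pi, w) \mapsto \int (f - w)\, d\pi$ is affine in each argument) to exchange $\sup_{w \in W}$ with $\inf_{\pi \in \Pi(\mu)}$; finally, for $\pi \notin \Pi_W(\mu)$ the inner supremum blows up to $+\infty$ by scaling $w = \alpha w_1$, $\alpha \to \infty$, which restricts the infimum to $\Pi_W(\mu)$. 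That minimax-plus-scaling step is the ingredient your proposal is missing: it handles arbitrary, unbounded constraint functions by linearity of $W$ rather than by representing dual functionals on a space large enough to contain them.
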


\section{Problem penalization and neural networks} \label{NeuralNetworks}
In this section, we modify the approach in Eckstein and Kupper \cite{eckstein2018computation} for solving optimal transportation problems via penalization. All proofs are obtained by adapting the proofs in Sections 2-5 and Appendix A of Eckstein and Kupper \cite{eckstein2018computation}. 

\subsection{Duality results}

As before, let $\mathcal{X}_{1}, \dots,  \mathcal{X}_{n}$ be Polish spaces, $\mathcal{X} = \mathcal{X}_{1} \times \cdots \times  \mathcal{X}_{n}$, $\mu = (\mu_{i} \in \mathcal{P}(\mathcal{X}_{i}))_{i = 1, \dots, n}$, $f \in C_{L}(\mu)$. Here we  consider the maximization problem
\begin{equation} \label{sup pi}
\begin{split}
\phi(f) := \sup_{\pi \in \Pi(\mu)} & \int_{\mathcal{X}}f\, d\pi, \\
\mbox{subject to:}\; \;  & \int_{\mathcal{X}}w\, d\pi = 0,
\end{split}
\end{equation}
for some $w \in W \subset C_{L}(\mu)$ and $\Pi(\mu)$ being again the set of measures on $\mathcal{P}(\mathcal{X}),$ which are candidate transport plans. Taking into account the constraint, we can write problem (\ref{sup pi}) more concisely:
\begin{equation} \label{sup pi short}
\begin{split}
\phi(f) = \sup_{\pi \in \mathcal{Q}} & \int_{\mathcal{X}}f\, d\pi, \\
\end{split}
\end{equation}
where $\mathcal{Q} = \Pi_{W}(\mu)$ is defined in (\ref{Pi_w}). Without loss of generality, here we consider only one constraint and set this constraint as zero. This is done only for simplifying the presentation, but it can be easily generalised to the case of finitely many constraints $w_1, w_2, \dots, w_n$ taking non-zero values (cf. (\ref{constrainedOTgen})). 

The above problem admits the following dual representation:
\begin{equation} \label{inf h}
\phi(f)  = \inf_{\substack{h + \lambda w \geq f \\ h \in \mathcal{H}, \, \lambda \in \mathbb{R} }}\int_{\mathcal{X}} h\, d\mu , 
\end{equation}
where $ \mathcal{H} \subset C_{L}(\mu),\; h(x_1, \dots, x_n) := \sum_{i=1}^{n}h_{i}(x_i)$ and $\lambda$ denotes a Lagrange multiplier.
The goal is to regularise $\phi(f)$ by penalizing the constraint $h + \lambda w \geq f$:
\begin{equation*}
\begin{split}
\phi_{\theta, \gamma}(f) := \inf_{h \in \mathcal{H}, \, \lambda \in \mathbb{R} }\left\{ \int_{\mathcal{X}} h \, d\mu + \int_{\mathcal{X}} \beta_{\gamma}(f-h-\lambda w)\, d\theta \right\},
\end{split}
\end{equation*}
for a sampling measure $\theta \in \mathcal{P}(\mathcal{X})$, and $\beta_{\gamma}(x) := \dfrac{1}{\gamma}\beta(\gamma x)$ a penalty function, which is parametrised by $\gamma > 0$. We assume that $\beta : \mathbb{R} \to \mathbb{R}_{+}$ is a differentiable increasing convex function. 
\begin{remark}
	An alternative approach here relies on introducing an entropy penalization:
	\begin{equation} \label{entropic_relaxation}
	\phi^{\varepsilon}(f) := \sup_{\pi \in \Pi_{W}(\mu)} \left\{  \int_{\mathcal{X}} f \, d\pi -\varepsilon\int_{\mathcal{X}} \left( \ln\left(\dfrac{d\pi}{d\theta}\right)  -1  \right) d\pi \right\}, 
	\end{equation}
	where $\theta \in \mathcal{P}(\mathcal{X})$ is a prior probability measure and $\varepsilon$ is a positive parameter such that $\lim_{\varepsilon \to \infty}\phi^{\varepsilon}(f) = \phi(f)$. By using the Fenchel-Rockafellar duality theorem, the problem in (\ref{entropic_relaxation}) can be dualised into the following strictly convex optimization problem:
	\begin{equation*}
	\phi^{\varepsilon}(f) = \inf_{h_{i} \in C_{L}(\mu_{i})}\left\{  \sum_{i=1}^{n} \int_{\mathcal{X}_{i}} h_{i} \, d\mu_{i}  + \varepsilon \int_{\mathcal{X}}e\,^{\frac{1}{\varepsilon}(f-h-\lambda w)} \, d\theta   \right\},
	\end{equation*}
	with $h(x_1, \dots, x_n) := \sum_{i=1}^{n}h_{i}(x_i)$.
	For details, see for instance Cuturi \cite{cuturi2013sinkhorn}. Of course, by setting $\varepsilon = 1/\gamma$ and $\beta(x) = \exp(x)$, it can be noticed that the entropy penalization is indeed a special case of the regularization method introduced above.
\end{remark}
In the following result, we show the dual representation of the regularised functional $\phi_{\theta, \gamma}$ and its convergence to $\phi$. 
\begin{theorem} \label{th1NN}
Let $f\in C_{L}(\mu)$ and $w \in W$. Suppose there exists $\pi \in \mathcal{Q}$ such that $\pi \ll \theta$ and $\int_\mathcal{X}\beta^{*}\left( \dfrac{d\pi}{d\theta} \right) d\theta < \infty$, where $\beta^{*}(y) := \sup_{x\in \mathbb{R}}\{ xy - \beta(x)\}$ for all $y \in \mathbb{R}_{+}$ is the convex conjugate of $\beta$. Then
\begin{equation} \label{phi_theta,gamma}
\phi_{\theta, \gamma}(f) = \sup_{ \mu \in \mathcal{Q}}\left\{ \int_{\mathcal{X}} f\, d\mu -\int_{\mathcal{X}} \beta_{\gamma}^{*}\left(\dfrac{d\mu}{d\theta}\right)d\theta \right\}.
\end{equation}
Moreover,
\begin{equation*}
\phi_{\theta, \gamma}(f) - \dfrac{\beta(0)}{\gamma} \leq \phi(f) \leq \phi_{\theta, \gamma}(f) + \dfrac{1}{\gamma}\int_{\mathcal{X}} \beta^{*}\left(\dfrac{d\mu_{\varepsilon}}{d\theta}\right)d\theta + \varepsilon,
\end{equation*}
whenever $\mu_{\varepsilon} \in \mathcal{Q}$ is an $\varepsilon$-optimizer of (\ref{sup pi}) such that $\mu_{\varepsilon} \ll \theta$ and \mbox{$\int_{\mathcal{X}} \beta^{*}\left(\dfrac{d\mu_{\varepsilon}}{d\theta}\right)d\theta < \infty$}.
If $\hat{h} \in \mathcal{H}$ is a minimizer of (\ref{inf h}) and $\hat{\lambda}$ the optimal value of the Lagrange multiplier, then $\hat{\mu} \in \mathcal{P}(\mathcal{X})$ defined by
\begin{equation} \label{joint_distribution}
\dfrac{d\hat{\mu}}{d\theta} := \beta'_{\gamma}(f-\hat{h}-\hat{\lambda} w)
\end{equation}
is a maximizer of (\ref{phi_theta,gamma}).
\end{theorem}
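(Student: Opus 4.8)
The plan is to adapt the penalization-and-convex-duality program of Eckstein and Kupper \cite{eckstein2018computation} to the present linearly-constrained setting, treating the three assertions of the theorem in turn. The only genuinely new ingredient relative to the unconstrained case is the Lagrange multiplier $\lambda$ encoding the constraint $\int_{\mathcal{X}} w\,d\pi = 0$; the unpenalized dual identity \eqref{inf h} itself is supplied by Theorem~\ref{constrained Kantorovich duality} applied with the single test function $w$ (and $\lambda$ as its multiplier). Throughout I will use the elementary scaling $\beta_{\gamma}^{*}(y) = \tfrac{1}{\gamma}\beta^{*}(y)$, which follows from $\beta_{\gamma}(x) = \tfrac{1}{\gamma}\beta(\gamma x)$ by the substitution $u = \gamma x$ in the definition of the conjugate; in particular the standing integrability hypothesis is unaffected by passing between $\beta^{*}$ and $\beta_{\gamma}^{*}$.

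For the dual representation \eqref{phi_theta,gamma} I would establish the two inequalities separately. The easy direction, that the supremum on the right is $\leq \phi_{\theta,\gamma}(f)$, is immediate from the Fenchel--Young inequality $\beta_{\gamma}(a) + \beta_{\gamma}^{*}(b) \geq ab$: taking $a = f - h - \lambda w$ and $b = d\mu/d\theta$ for any $\mu \in \mathcal{Q}$ and any feasible $(h,\lambda)$, and integrating against $\theta$, the cross term $\int (f - h - \lambda w)\,d\mu$ collapses to $\int f\,d\mu - \int_{\mathcal{X}} h\,d\mu$ because $\mu$ has marginals $\mu_i$ (so $\int h\,d\mu = \sum_i \int h_i\,d\mu_i$) and $\int w\,d\mu = 0$; rearranging gives $\int f\,d\mu - \int \beta_{\gamma}^{*}(d\mu/d\theta)\,d\theta \leq \int_{\mathcal{X}} h\,d\mu + \int \beta_{\gamma}(f - h - \lambda w)\,d\theta$ for all such $\mu$ and $(h,\lambda)$. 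For the reverse inequality (absence of a duality gap) I would cast the inner minimization as $\inf_{(h,\lambda)}\{\ell(h,\lambda) + P(f - h - \lambda w)\}$, with the linear functional $\ell(h,\lambda) = \int_{\mathcal{X}} h\,d\mu$ and the convex integral penalty $P(g) = \int \beta_{\gamma}(g)\,d\theta$, and apply a Fenchel--Rockafellar theorem on $C_{L}(\mu)$. By Rockafellar's interchange of conjugation and integration for convex normal integrands, the conjugate of $P$ is $P^{*}(\nu) = \int \beta_{\gamma}^{*}(d\nu/d\theta)\,d\theta$ (with $\nu \ll \theta$ built in), while the affine map $(h,\lambda) \mapsto f - h - \lambda w$ together with $\ell$ forces the dual variable $\nu$ to carry the prescribed marginals and to satisfy $\int w\,d\nu = 0$, i.e. $\nu \in \mathcal{Q}$. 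The hypothesis that some $\pi \in \mathcal{Q}$ satisfies $\pi \ll \theta$ and $\int \beta^{*}(d\pi/d\theta)\,d\theta < \infty$ then plays the role of the constraint qualification that keeps the dual value finite and closes the gap.

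The convergence bounds drop out of \eqref{phi_theta,gamma}. For the lower estimate, every $\mu \in \mathcal{Q}$ satisfies $\int f\,d\mu \leq \phi(f)$, while $\beta^{*}(y) \geq 0\cdot y - \beta(0) = -\beta(0)$ together with $\theta(\mathcal{X}) = 1$ gives $-\tfrac{1}{\gamma}\int \beta^{*}(d\mu/d\theta)\,d\theta \leq \tfrac{\beta(0)}{\gamma}$; summing and taking the supremum over $\mu$ yields $\phi_{\theta,\gamma}(f) \leq \phi(f) + \beta(0)/\gamma$. For the upper estimate I would insert the $\varepsilon$-optimizer $\mu_{\varepsilon}$ of \eqref{sup pi} into the right-hand side of \eqref{phi_theta,gamma}, so that $\phi_{\theta,\gamma}(f) \geq \int f\,d\mu_{\varepsilon} - \tfrac{1}{\gamma}\int \beta^{*}(d\mu_{\varepsilon}/d\theta)\,d\theta \geq \phi(f) - \varepsilon - \tfrac{1}{\gamma}\int \beta^{*}(d\mu_{\varepsilon}/d\theta)\,d\theta$, which rearranges to the claimed bound.

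Finally, for the optimality of $\hat{\mu}$ defined in \eqref{joint_distribution}, I would invoke the first-order (Euler--Lagrange) conditions at the minimizer $(\hat{h},\hat{\lambda})$ of \eqref{inf h}. Differentiating the penalized objective in $\lambda$ gives $\int w\,\beta_{\gamma}'(f - \hat{h} - \hat{\lambda} w)\,d\theta = 0$, i.e. $\int w\,d\hat{\mu} = 0$; differentiating in the direction of an arbitrary $g_i \in C_{L}(\mu_i)$ gives $\int g_i\,d\mu_i = \int g_i\,\beta_{\gamma}'(f - \hat{h} - \hat{\lambda} w)\,d\theta = \int g_i\,d\hat{\mu}$, so that $\hat{\mu}$ has the correct $i$-th marginal for every $i$; together these show $\hat{\mu} \in \mathcal{Q}$. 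Since $d\hat{\mu}/d\theta = \beta_{\gamma}'(f - \hat{h} - \hat{\lambda} w)$, the Fenchel--Young inequality used in the weak-duality step holds with equality pointwise, so $\hat{\mu}$ attains the supremum in \eqref{phi_theta,gamma}. The main obstacle in the whole argument is the no-gap step in the dual representation: verifying that the Fenchel--Rockafellar machinery applies on the non-reflexive space $C_{L}(\mu)$ and that the linear constraint can be absorbed into the multiplier $\lambda$ without destroying the duality---this is precisely where the constraint qualification and Theorem~\ref{constrained Kantorovich duality} are needed.
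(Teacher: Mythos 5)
Your weak-duality direction (Fenchel--Young plus collapsing the cross term using the marginals and $\int w\,d\mu=0$), your two sandwich bounds, and your first-order-condition argument for the maximizer $\hat{\mu}$ are all correct and essentially coincide with the paper's proof (which adapts Eckstein and Kupper): in particular, your verification that $\hat{\mu}\in\mathcal{Q}$ by differentiating in $\lambda$ and in directions $g_i\in C_L(\mu_i)$, followed by the pointwise equality $\beta_{\gamma}^{*}\bigl(\beta_{\gamma}'(x)\bigr)=x\beta_{\gamma}'(x)-\beta_{\gamma}(x)$, is exactly the paper's argument, spelled out even more carefully (the paper merely asserts membership in $\mathcal{Q}$). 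The minor difference that you derive $\phi_{\theta,\gamma}(f)\leq\phi(f)+\beta(0)/\gamma$ from the dual representation via $\beta^{*}\geq-\beta(0)$, whereas the paper gets it directly from the primal definition (any $h$ with $h+\lambda w\geq f$ gives $\beta_{\gamma}(f-h-\lambda w)\leq\beta_{\gamma}(0)$ by monotonicity of $\beta$), is immaterial; both are valid, though the paper's version has the advantage of not depending on the hard part of the theorem.

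The genuine gap is the no-gap direction of \eqref{phi_theta,gamma}, and it sits precisely where you yourself locate ``the main obstacle.'' Fenchel--Rockafellar on $C_{L}(\mu)$ produces dual variables in the topological dual of $C_{L}(\mu)$, which (like the dual of $C_{b}(\mathcal{X})$) contains purely finitely additive functionals; the constraint qualification you invoke keeps the dual value finite and controls attainment, but it does not convert a finitely additive dual optimizer into a countably additive probability measure, so your argument does not yet yield a supremum over $\mathcal{Q}$. The paper closes exactly this hole by a different route: it first shows that $\phi_{\theta,\gamma}$ is real-valued and continuous from above on $C_{L}(\mu)$ (monotone convergence along sequences $f^{k}\downarrow f$), and then applies Proposition 1.1 of Cheridito, Kupper and Tangpi \cite{cheridito2015representation}, a Daniell--Stone-type representation theorem whose whole purpose is to produce representing measures that are $\sigma$-additive; the conjugate then splits as $\phi_{\theta,\gamma}^{*}(\mu)=\phi^{*}(\mu)+\int_{\mathcal{X}}\beta_{\gamma}^{*}(d\mu/d\theta)\,d\theta$, and since $\mathcal{H}$ and the span of $w$ are linear spaces, $\phi^{*}(\mu)$ equals $0$ when $\mu$ has marginals $\mu_{i}$ and annihilates $w$, and $+\infty$ otherwise, which is what restricts the supremum to $\mathcal{Q}$. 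If you insist on the Fenchel--Rockafellar route you must supply the analogous upgrading step yourself (compare the paper's proof of Lemma \ref{Kantorovich duality}, which needs the Stone--\v{C}ech compactification, Riesz representation, and a density argument in a dedicated seminorm to show its Hahn--Banach functionals are transport plans); without that step the central identity of the theorem is not established.
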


Now we consider a different type of approximation of the initial problem. Let us take a sequence $\mathcal{H}^{1} \subseteq \mathcal{H}^{2} \subseteq \cdots $ of subsets of $\mathcal{H}$, and set $\mathcal{H}^{\infty} := \bigcup_{m \in \mathbb{N}} \mathcal{H}^{m}$. For each $m \in \mathbb{N} \cup \{+\infty\}$, we define the approximated functional $\phi^{m}(f)$ by
\begin{equation} \label{approximated superhedging}
\phi^{m}(f) := \inf_{\substack{h \in \mathcal{H}^{m}\\ h + \lambda w\geq f \\ \lambda \in \mathbb{R}}} \int h \, d \mu.
\end{equation}
Intuitively, here we do not internalize the penalization of the inequality $h + \lambda w \geq f$, but rather take a finite dimensional search space $\mathcal{H}^{m}$. In order for the approximation of $\phi(f)$ by $\phi^{m}(f)$ to be possible, we need the following density condition on $\mathcal{H}^{\infty}$. \vspace{0.5cm}

\noindent \textbf{Condition (D):} For every $\varepsilon > 0$ and $\mu \in \mathcal{P}(\mathcal{X}),$ it  holds that
\begin{itemize}
\item for every $ h \in \mathcal{H}$ there exists $h_1 \in \mathcal{H}^{\infty}$ such that $\int \vert h - h_1 \vert \, d\mu \leq \varepsilon$,
\item there exists $ h_2 \in \mathcal{H}^{\infty}$ such that $\mathds{1}_{K^{c}} \leq h_2$ and $\int h_2 \, d\mu \leq \varepsilon$ for some compact $K \subset \mathcal{X}$.
\end{itemize} 
The condition above allows then to obtain the subsequent convergence result.
\begin{proposition} \label{prop1NN}
Assume that $\mathcal{H}^{\infty}$ is a linear space which contains constant functions. Under Condition (D), one has
\begin{equation*}
\lim_{m\to \infty}\phi^{m}(f) = \phi^{\infty}(f) = \phi(f)
\end{equation*}
for all $f \in C_{L}(\mu)$. 
\end{proposition}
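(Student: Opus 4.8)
The plan is to establish the two equalities separately, exploiting first that the approximating values are monotone. Since $\mathcal{H}^{1} \subseteq \mathcal{H}^{2} \subseteq \cdots \subseteq \mathcal{H}^{\infty} \subseteq \mathcal{H}$, the admissible sets of the minimization problems (\ref{approximated superhedging}) are nested and increasing in $m$, so $(\phi^{m}(f))_{m}$ is non-increasing and bounded below by $\phi(f)$; in particular its limit exists. I would first dispatch the identity $\lim_{m\to\infty}\phi^{m}(f) = \phi^{\infty}(f)$: the inequality $\geq$ is immediate from $\mathcal{H}^{m} \subseteq \mathcal{H}^{\infty}$, while for $\leq$ I would take any admissible pair $(h,\lambda)$ for $\phi^{\infty}(f)$, note that $h \in \mathcal{H}^{\infty} = \bigcup_{m}\mathcal{H}^{m}$ lies in some $\mathcal{H}^{m_0}$, so that $(h,\lambda)$ is admissible for every $\phi^{m}(f)$ with $m \geq m_0$, and pass to the infimum over such pairs. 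The inequality $\phi^{\infty}(f) \geq \phi(f)$ is likewise trivial since $\mathcal{H}^{\infty}\subseteq\mathcal{H}$.

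The substance lies in the reverse inequality $\phi^{\infty}(f) \leq \phi(f)$. Here I would fix $\delta>0$ and choose a near-minimizer $(h,\lambda)$ of the dual problem (\ref{inf h}), i.e. $h \in \mathcal{H}$, $h + \lambda w \geq f$ and $\int h\,d\mu \leq \phi(f)+\delta$ (such a pair exists by definition of the infimum). Writing $g := f - \lambda w \in C_{L}(\mu)$, the constraint reads $h \geq g$. Using the first bullet of Condition (D) I would pick $h_{1}\in\mathcal{H}^{\infty}$ with $\int \vert h - h_{1}\vert\,d\mu \leq \varepsilon$. The pair $(h_{1},\lambda)$ has almost optimal cost but may violate the constraint; the key elementary observation is that the violation $v := (g - h_{1})^{+}$ satisfies $v \leq (h-h_{1})^{+} \leq \vert h - h_{1}\vert$ pointwise (since $g \leq h$), hence $\int v\,d\mu \leq \varepsilon$. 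If I can produce a correction $c \in \mathcal{H}^{\infty}$ with $c \geq v$ pointwise and $\int c\,d\mu$ small, then $\hat{h} := h_{1}+c \in \mathcal{H}^{\infty}$ (by linearity) satisfies $\hat{h}+\lambda w = h_{1}+c+\lambda w \geq h_{1}+(g-h_{1})+\lambda w = f$, so that $(\hat{h},\lambda)$ is admissible for $\phi^{\infty}(f)$ with $\int \hat{h}\,d\mu \leq \phi(f)+\delta+\varepsilon+\int c\,d\mu$; letting $\delta,\varepsilon\to 0$ then yields $\phi^{\infty}(f)\leq\phi(f)$.

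The main obstacle, and the only place where the second bullet of Condition (D) and the growth control built into $C_{L}(\mu)$ enter, is precisely the dominating correction $c$: I must bound a non-negative continuous function $v$ of small $\mu$-mass by a member of $\mathcal{H}^{\infty}$ of small $\mu$-integral, the difficulty being that $v$ need not be bounded on the non-compact space $\mathcal{X}$. My plan is to split $\mathcal{X}$ into a large compact core $K$ and its tail $K^{c}$. On $K$ the function $v$ is bounded, and because $f,w,h_{1}\in C_{L}(\mu)$ the violation is dominated throughout by a fixed $\mu$-integrable envelope $H \in \mathcal{H}$, so that choosing $K$ large makes $\int_{K^{c}}H\,d\mu$ as small as desired by integrability of $H$. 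I would then assemble $c$ from two pieces: a large multiple $N h_{2}$ of the tightness function $h_{2}$ of the second bullet (which dominates $\mathds{1}_{K^{c}}$ and carries arbitrarily small mass) to absorb the violation off the compact core, together with an $\mathcal{H}^{\infty}$-function dominating the bounded part of $v$ on $K$, using that $\mathcal{H}^{\infty}$ is linear and contains the constants. Making these two estimates compatible — pointwise domination of an a priori unbounded violation against $\int c\,d\mu = O(\varepsilon)$ — is the delicate balancing act where essentially all the work is concentrated, and it is exactly the role for which Condition (D) was designed.
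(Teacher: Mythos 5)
Your preliminary reductions are fine: the monotonicity argument giving $\lim_{m\to\infty}\phi^{m}(f)=\phi^{\infty}(f)$ and the trivial inequality $\phi^{\infty}(f)\geq\phi(f)$ are correct. The gap is exactly the step you yourself flag as "the delicate balancing act": constructing $c\in\mathcal{H}^{\infty}$ with $c\geq v:=(g-h_{1})^{+}$ \emph{pointwise} and $\int c\,d\mu=O(\varepsilon)$. Condition (D) cannot deliver this. Its first bullet is an $L^{1}(\mu)$-density statement only, and $\int\vert h-h_{1}\vert\,d\mu\leq\varepsilon$ gives no control on $\sup_{K}\vert h-h_{1}\vert$: the violation $v$ may have height of order $1/\varepsilon$ on a set of tiny $\mu$-mass inside your compact core $K$. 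Dominating "the bounded part of $v$ on $K$" by a constant then forces the constant $\sup_{K}v$, whose integral is $\sup_{K}v$, not $O(\varepsilon)$. The tail piece fails too: $Nh_{2}$ only exceeds $N$ on $K^{c}$, whereas $v$ is dominated merely by a $\mu$-integrable envelope $H\in\mathcal{H}$ which is unbounded on the non-compact set $K^{c}$, so no finite multiple of $h_{2}$ dominates $v$ there. Pointwise domination with small integral is a strictly stronger requirement than $L^{1}$-closeness, and your closing assertion that this "is exactly the role for which Condition (D) was designed" is the reverse of the truth: (D) was designed for an argument that never needs pointwise repair of the superhedging inequality.

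That argument is the paper's actual proof (adapted from Eckstein and Kupper \cite{eckstein2018computation}). One first uses the \emph{second} bullet of (D) together with Dini's lemma to show that the convex, increasing functional $\phi^{\infty}$ is continuous from above on $C_{L}(\mu)$; the representation result of Cheridito, Kupper and Tangpi (their Proposition 1.1) then yields $\phi^{\infty}(f)=\sup_{\mu\in\mathcal{P}(\mathcal{X})}\left\{\int f\,d\mu-\phi^{\infty*}(\mu)\right\}$ over countably additive measures. Because $\mathcal{H}^{\infty}$ and $\mathcal{H}$ are linear spaces (and $\lambda$ ranges over all of $\mathbb{R}$), both conjugates $\phi^{\infty*}$ and $\phi^{*}$ take only the values $0$ and $+\infty$, and the \emph{first} bullet of (D) is used precisely to show they vanish on the same measures: if $\int h\,d\mu\neq\int h\,d\mu_{0}$ for some $h\in\mathcal{H}$, the $L^{1}$-approximant $h_{1}\in\mathcal{H}^{\infty}$ inherits the strict inequality. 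Hence $\phi^{\infty*}=\phi^{*}$, the two dual representations coincide, and $\phi^{\infty}(f)=\phi(f)$ — no pointwise correction is ever needed, which is why $L^{1}$-density suffices. A direct patching proof along your lines would require a strengthened density hypothesis (uniform approximation on compacts plus a dominating tail function), which indeed holds in the neural-network setting of Lemma 3.3 of \cite{eckstein2018computation} but is strictly stronger than Condition (D) as stated.
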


Given a sampling measure $\theta$ and a parametrised penalty function $\beta_{\gamma}$, our final relaxation of the problem is
\begin{equation} \label{final_relaxation}
\phi_{\theta, \gamma}^{m}(f) := \inf_{h \in \mathcal{H}^{m}, \, \lambda \in \mathbb{R}}\left\{ \int_{\mathcal{X}} h \, d\mu + \int_{\mathcal{X}} \beta_{\gamma}(f-h-\lambda w)\, d\theta \right\},
\end{equation}
for all $f \in C_{L}(\mu),  w \in W$. As a consequence of the two approximation steps $\phi_{\theta, \gamma}(f) \to \phi(f)$ for $\gamma \to \infty$ in Theorem \ref{th1NN}, and $\phi^{m}(f) \to \phi(f)$ for $m \to \infty$ in Proposition \ref{prop1NN}, we get the final convergence result.
\begin{proposition}
Suppose that $\mathcal{H}^{\infty}$ satisfies Condition (D) and for every $\varepsilon > 0$ there exists an $\varepsilon$-optimizer $\mu_{\varepsilon}$ of (\ref{sup pi}) such that $\mu_{\varepsilon} \ll \theta$ and $\int_{\mathcal{X}} \beta^{*}\left( \dfrac{d\mu_{\varepsilon}}{d\theta} \right) d\theta < +\infty$. Then, for every $f \in C_{L}(\mu)$ one has $\phi^{m}_{\theta, \gamma} \to \phi(f)$ for $\min\{m,\gamma\} \to \infty$.
\end{proposition}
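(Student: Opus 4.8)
The plan is to wrap $\phi^{m}_{\theta,\gamma}(f)$ between two quantities, one controlled by the penalization limit of Theorem \ref{th1NN} and the other by the finite-dimensional approximation limit of Proposition \ref{prop1NN}, and then to let $\min\{m,\gamma\}\to\infty$. The virtue of this route is that the two approximations decouple, so that no joint estimate is ever required. First I would record the lower bound. Since each $\mathcal{H}^{m}$ is a subset of $\mathcal{H}$, the infimum defining $\phi^{m}_{\theta,\gamma}(f)$ ranges over a smaller set than the one defining $\phi_{\theta,\gamma}(f)$, whence
\begin{equation*}
\phi^{m}_{\theta,\gamma}(f) \;\geq\; \phi_{\theta,\gamma}(f) \qquad \text{for all } m \in \mathbb{N}, \, \gamma > 0.
\end{equation*}

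Next I would establish the upper bound, which is the only mildly non-obvious step. Let $(h,\lambda)$ be any pair feasible for the hard-constrained problem (\ref{approximated superhedging}), i.e. $h\in\mathcal{H}^{m}$ and $h+\lambda w\geq f$. Then $f-h-\lambda w\leq 0$ pointwise, and since $\beta$ (hence $\beta_{\gamma}$) is increasing we get $\beta_{\gamma}(f-h-\lambda w)\leq\beta_{\gamma}(0)=\beta(0)/\gamma$; integrating against the probability measure $\theta$ gives $\int_{\mathcal{X}}\beta_{\gamma}(f-h-\lambda w)\,d\theta\leq\beta(0)/\gamma$. Because $\phi^{m}_{\theta,\gamma}(f)$ is an infimum over \emph{all} pairs in $\mathcal{H}^{m}\times\mathbb{R}$, in particular over the feasible ones, taking the infimum over feasible $(h,\lambda)$ yields
\begin{equation*}
\phi^{m}_{\theta,\gamma}(f) \;\leq\; \phi^{m}(f) + \frac{\beta(0)}{\gamma}.
\end{equation*}

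Combining the two displays produces the sandwich $\phi_{\theta,\gamma}(f)\leq\phi^{m}_{\theta,\gamma}(f)\leq\phi^{m}(f)+\beta(0)/\gamma$, and I would finish by sending $\min\{m,\gamma\}\to\infty$. The lower end converges because $\gamma\to\infty$ is forced, and under the standing hypotheses (existence of $\varepsilon$-optimizers $\mu_{\varepsilon}\ll\theta$ with $\int_{\mathcal{X}}\beta^{*}(d\mu_{\varepsilon}/d\theta)\,d\theta<\infty$) Theorem \ref{th1NN} gives $\phi_{\theta,\gamma}(f)\to\phi(f)$. The upper end converges because $m\to\infty$ is forced, Condition (D) and Proposition \ref{prop1NN} give $\phi^{m}(f)\to\phi(f)$, and $\beta(0)/\gamma\to 0$. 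Hence both ends tend to $\phi(f)$ and the squeeze yields $\phi^{m}_{\theta,\gamma}(f)\to\phi(f)$ for every $f\in C_{L}(\mu)$.

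The place that would demand care, and what I expect to be the genuine obstacle if one attempted the direct route instead, is controlling the penalty term after replacing a near-optimizer $h\in\mathcal{H}$ of the penalized problem by a nearby $h_{1}\in\mathcal{H}^{\infty}$: Condition (D) only supplies $L^{1}$-closeness, and since $\beta_{\gamma}$ need not be Lipschitz this does not bound the difference of the penalty integrals $\int_{\mathcal{X}}\beta_{\gamma}(f-h_{1}-\lambda w)\,d\theta$. The feasibility estimate above sidesteps this entirely by comparing against the hard-constrained value $\phi^{m}(f)$ rather than against $\phi_{\theta,\gamma}(f)$, allowing Proposition \ref{prop1NN} to be invoked as a black box. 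I would also note that $\phi^{m}(f)$ may equal $+\infty$ for small $m$, but this merely renders the upper bound vacuous there and is immaterial in the limit, since Proposition \ref{prop1NN} guarantees finiteness and convergence to $\phi(f)$ for large $m$.
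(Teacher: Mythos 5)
Your proof is correct and takes essentially the same approach as the paper's: both arguments sandwich $\phi^{m}_{\theta,\gamma}(f)$ between $\phi_{\theta,\gamma}(f)$ (from $\mathcal{H}^{m}\subseteq\mathcal{H}$) and $\phi^{m}(f)+\beta_{\gamma}(0)$ (from bounding the penalty term by $\beta_{\gamma}(0)$ on pairs feasible for the hard-constrained problem), and then squeeze both ends to $\phi(f)$ via Theorem \ref{th1NN} and Proposition \ref{prop1NN}.
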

When the reference measure $\theta$ is chosen as a product measure between the marginals, Proposition 2 in Eckstein et al. \cite{eckstein2018robust} ensures the existence of a $\mu_{\varepsilon}$-optimizer even for non regular transport plans. 
\subsection{Neural networks}
Before moving on, let us first recall the definition and some key properties of  feed-forward neural networks:
\begin{definition} \label{def:neuralnetwork_def}
Let $ L,m_{0},m_{1},\dots,m_{L+1} \in \mathbb{N}$, $\varphi : \mathbb{R} \to \mathbb{R}$, $A_{\ell}: \mathbb{R}^{m_{\ell}} \to \mathbb{R}^{m_{\ell+1}}$, for $\ell = 0,\dots,L$. A mapping $ \mathcal{NN}^{\varphi}_{L,m_0,m_{L+1}} : \mathbb{R}^{m_0} \to \mathbb{R}^{m_{L+1}}$ of the form
\begin{equation*}
\mathbb{R}^{m_0} \ni x \mapsto \overbrace{A(x)}^{\textnormal{output layer}} = \underbrace{\varphi \circ A_{L}}_{L\textnormal{-th hidden layer}} \circ \cdots \circ \underbrace{\varphi \circ A_{1}}_{1\textnormal{st hidden layer}}\circ \overbrace{\varphi \circ A_{0}}^{\textnormal{input layer}}
\end{equation*}
is called a feed-forward fully connected neural network. Here, $m_0$ denotes the input dimension, $L$ the number of hidden layers,\footnote{When $L >2$, the neural network is regarded as a \textit{deep} neural network, whereas when $L \leq 2$ as a \textit{shallow} neural network.} $m_{1},\dots,m_{L}$ the number of nodes (or neurons) of the hidden layers, $m_{L+1}$ the output dimension. Also,  $\varphi$ is a nonlinear activation function applied componentwise at each node of the network, and the affine functions  $A_{\ell}$, for $\ell =0,\dots,L$, are given as $A_{\ell} = M^{\ell}x + b^{\ell}$ for some matrix $M^{\ell} \in \mathbb{R}^{m_{\ell+1} \times m_{\ell}}$ and vector $b^{\ell} \in \mathbb{R}^{m_{\ell+1}}$. For any $i = 1,\dots,m_{\ell+1}$, $j = 1,\dots,m_{\ell}$, the element $M_{ij}^{\ell}$ must be interpreted as the weight of the edge connecting node $i$ of layer $\ell$ to node $j$ of layer $\ell+1$. 
\end{definition}
Denote as $\mathfrak{N}^{\varphi}_{L,m_0,m_{L+1}}$ the set of neural networks of the type $\mathcal{NN}^{\varphi}_{L,m_0,m_{L+1}}:\mathbb{R}^{m_0} \to \mathbb{R}^{m_{L+1}}$ (Definition \ref{def:neuralnetwork_def}). The next fundamental result shows that neural networks can approximate functions with arbitrary accuracy (Hornik \cite{hornik1989multilayer, hornik1991approximation}). 
\begin{theorem}[Universal approximation]
Assume $\varphi$ is bounded and non-constant. The following statements hold:
\begin{itemize}
\item For any finite measure $\mu$ on $\mathbb{R}^{m_0}$, $\mathfrak{N}^{\varphi}_{L,m_0,1}$ is dense in $L^{p}(\mathbb{R}^{m_0},\mu)$, for \mbox{$1\leq p < \infty$}.
\item If, in addition, $\varphi$ is continuous, then $\mathfrak{N}^{\varphi}_{L,m_0,1}$ is dense in $C(\mathbb{R}^{m_0})$ for all compact subsets $X$ of $\mathbb{R}^{m_0}$.
\end{itemize}
\end{theorem}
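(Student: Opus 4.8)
The plan is to reduce everything to approximating \emph{pure frequencies}. I would first note that it suffices to treat the single-hidden-layer case: the set $\mathfrak{N}^{\varphi}_{L,m_0,1}$ allows arbitrary width, and density of the single-hidden-layer functions $\Sigma(\varphi):=\{\,x\mapsto\sum_{j=1}^{N}\beta_j\,\varphi(\langle a_j,x\rangle+b_j):N\in\mathbb{N},\ a_j\in\mathbb{R}^{m_0},\ b_j,\beta_j\in\mathbb{R}\,\}$ already yields the claim (the general multilayer statement being Hornik's original formulation, the extra layers approximating the identity on the relevant compact range). The key observation is that the functions $x\mapsto\cos(\langle a,x\rangle)$ and $x\mapsto\sin(\langle a,x\rangle)$, $a\in\mathbb{R}^{m_0}$, generate through the angle-addition formulas an algebra of trigonometric polynomials that separates points and contains the constants; by Stone--Weierstrass this algebra is dense in $C(K)$ for every compact $K$, and (after a standard truncation controlling the $\mu$-tail) dense in $L^{p}(\mathbb{R}^{m_0},\mu)$ for every finite $\mu$ and $1\le p<\infty$. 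Hence it is enough to approximate a single ridge cosine $x\mapsto\cos(\langle a,x\rangle+b)$ by elements of $\Sigma(\varphi)$.

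Substituting $t=\langle a,x\rangle+b$, this collapses to the univariate problem of approximating $\cos$ by finite sums $\sum_j\beta_j\,\varphi(w_jt+\theta_j)$, and this is the engine of the proof and the place where boundedness and non-constancy of $\varphi$ enter. Since $\varphi$ is bounded it defines a tempered distribution, and a bounded polynomial is constant; as $\varphi$ is non-constant, $\widehat{\varphi}$ is not supported at the origin alone, so there is $\xi_0\neq 0$ in its support. Modulating and differencing translates of $\varphi$ (choosing the $\theta_j$ to form a finite-difference/mollification that annihilates the low-frequency part) extracts an approximation of $\cos(\xi_0 t)$; dilating $t\mapsto\lambda t$ then sweeps $\xi_0$ over all nonzero frequencies, so every $\cos(\omega\,\cdot)$ and $\sin(\omega\,\cdot)$ lies in the closed span of $\{\varphi(w\,\cdot+\theta)\}$ --- in $L^{p}(\nu)$ for any finite $\nu$ when $\varphi$ is merely bounded, and uniformly on compacts when $\varphi$ is in addition continuous.

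Assembling the pieces, I would substitute $t=\langle a,x\rangle+b$ back: a sum $\sum_j\beta_j\varphi(w_j t+\theta_j)$ becomes $\sum_j\beta_j\varphi(\langle w_ja,x\rangle+(w_jb+\theta_j))\in\Sigma(\varphi)$, so each ridge cosine, hence each trigonometric polynomial, is approximated by network functions; combining with the Stone--Weierstrass/Fourier density of trigonometric polynomials gives density of $\Sigma(\varphi)$. The two bullets then separate by how the univariate engine is controlled: the $L^{p}(\mu)$ statement needs only boundedness and non-constancy, the errors being measured in the $L^p$-norm against the finite $\mu$, whereas the $C(K)$ statement additionally uses continuity of $\varphi$ to upgrade the univariate approximation of $\cos$ to a uniform one on the compact range of $\langle a,x\rangle+b$ over $x\in K$.

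The main obstacle is precisely this univariate engine for a merely bounded, possibly discontinuous $\varphi$: justifying that $\widehat{\varphi}$ carries mass away from the origin, and that multiplying the distribution $\widehat{\varphi}$ by the (continuous, bounded) Fourier transform of a finite measure forces the latter to vanish on $\operatorname{supp}\widehat{\varphi}$. A clean self-contained route that bypasses the explicit construction is the Hahn--Banach/annihilator argument: were $\Sigma(\varphi)$ not dense in $L^{p}(\mu)$, duality would furnish a nonzero $g\in L^{q}(\mu)$ with $\int\varphi(\langle a,x\rangle+b)\,g\,d\mu=0$ for all $a,b$; setting $d\sigma=g\,d\mu$ (a finite signed measure) and pushing $\sigma$ forward under $x\mapsto\langle a,x\rangle$ to a finite measure $\sigma_a$ on $\mathbb{R}$ turns this into $\varphi\star\widetilde{\sigma_a}\equiv 0$, whence $\widehat{\varphi}\,\widehat{\sigma_a}=0$. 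Since $\widehat{\sigma_a}$ is continuous, $\widehat{\varphi}$ is nonzero off the origin, and $\widehat{\sigma_{\lambda a}}(\xi)=\widehat{\sigma_a}(\lambda\xi)$ rescales frequencies, one gets $\widehat{\sigma_a}\equiv 0$ for every $a$, i.e.\ $\widehat{\sigma}\equiv 0$ and $\sigma=0$, contradicting $g\neq 0$. I expect the bookkeeping around the $p=1$ (so $q=\infty$) endpoint and the distributional product to be the only genuinely delicate points.
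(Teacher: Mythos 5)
The paper does not actually prove this theorem: it is quoted as background from Hornik \cite{hornik1989multilayer, hornik1991approximation}, so there is no in-paper argument to compare yours against. Judged on its own merits, your proposal is a faithful reconstruction of the two classical proofs in that cited literature: the constructive route (ridge cosines, angle-addition plus Stone--Weierstrass, then a univariate extraction of frequencies from $\varphi$) is essentially Hornik--Stinchcombe--White, while the Hahn--Banach/annihilator argument you end with is the Cybenko--Hornik duality proof, and it is the variant that can be completed rigorously under the stated hypotheses. Your reduction to Fourier uniqueness of the annihilating measure $\sigma = g\,d\mu$, via the scaling identity $\widehat{\sigma_{\lambda a}}(\xi)=\widehat{\sigma_a}(\lambda\xi)$ and a single nonzero frequency $\xi_0\in\operatorname{supp}\widehat{\varphi}$ (which exists because a bounded $\varphi$ with $\operatorname{supp}\widehat{\varphi}\subseteq\{0\}$ is a bounded polynomial, hence constant), is correct. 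The worry you raise about the $p=1$ endpoint is unnecessary: since $\mu$ is finite, the dual element $g\in L^{\infty}(\mu)$ still yields a finite signed measure $g\,d\mu$, and the argument is uniform over $1\le p<\infty$.

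Two points do need repair. First, the product $\widehat{\varphi}\,\widehat{\sigma_a}$ is, as you suspect, not defined: $\widehat{\sigma_a}$ is merely continuous, and a tempered distribution cannot be multiplied by a continuous function. The correct chain is $\varphi\star\widetilde{\sigma_a}\equiv 0 \Rightarrow \varphi\star(\widetilde{\sigma_a}\star\psi)=0$ for every Schwartz $\psi$, with $\widetilde{\sigma_a}\star\psi\in L^{1}(\mathbb{R})$, and then Wiener's Tauberian theorem (local division in the Wiener algebra) gives that $\widehat{\varphi}$ vanishes on the open set where $\widehat{\sigma_a}\,\widehat{\psi}\neq 0$; taking $\psi$ Gaussian yields $\widehat{\sigma_a}=0$ on $\operatorname{supp}\widehat{\varphi}$, which is exactly what your scaling step needs. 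This is a genuine, if standard, harmonic-analysis input rather than bookkeeping, and it likewise rigorizes your ``extract $\cos(\xi_0 t)$ from translates'' step (Beurling spectral analysis), where one must also mind that the natural closure there is weak-$*$ in $L^{\infty}$, not the $L^{p}(\nu)$ or uniform topology you ultimately need. Second, the one-line multilayer reduction (``extra layers approximate the identity'') fails as stated for the first bullet, where $\varphi$ may be discontinuous: composition does not commute with $L^{p}$-limits, and there is no compact range to invoke. The clean fix is inductive: the earlier layers define a bounded measurable map $\Phi$ into some cube $[-B,B]^{N}$; apply the one-hidden-layer theorem to the pushforward measure $\Phi_{\#}\mu$, a finite measure on $\mathbb{R}^{N}$, to approximate the affine output map in $L^{p}(\Phi_{\#}\mu)$, which transports back to $L^{p}(\mu)$ after composition. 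For the second bullet the same induction works with uniform approximation on the compact image. Note also that Hornik's theorems, as cited, concern one hidden layer, so the multilayer form stated in the paper implicitly requires precisely this extra step.
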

Further results have been derived based on different assumptions on $\varphi$ and different architectures of the network. For instance, see Hornik \cite{hornik1993some} and a more recent paper by B\"{o}lcskei et al. \cite{bolcskei2019optimal} and references therein. 
\paragraph{Modeling $\mathcal{H}^{m}$ via neural networks.} Encouraged by mentioned results, we can move back to our problem $\phi_{\theta, \gamma}^{m}(f)$, which we restate here for readability: 
\begin{equation} \label{final approximated superhedging}
\phi_{\theta, \gamma}^{m}(f) = \inf_{h \in \mathcal{H}^{m}, \, \lambda \in \mathbb{R}}\left\{ \int_{\mathcal{X}} h \, d\mu + \int_{\mathcal{X}} \beta_{\gamma}(f-h - \lambda w)\, d\theta \right\},
\end{equation}
for all $f \in C_{L}(\mu), w \in W$. 

In the following, we  work with networks with fixed number of layers and nodes, but unknown parameter values (weights). As common practice, we  set an equal number $m$ of nodes for each hidden layer, that is, $ m= m_{1} = \dots = m_{L}$. Also, we denote by $\mathfrak{N}^{\varphi}_{L,n_0,m_{L+1}}(\Theta_{m})$ the sets of neural networks with parameters $\Theta_{m} \subset \mathbb{R}^{d_{m}}$, for some $d_{m} \in \mathbb{N}$ depending on the architecture of the network.

Now, recall that the elements in $\mathcal{H}^{m} \subseteq\mathcal{H}$ are functions in $C(\mathcal{X})$. For this reason, we can approximate them by means of neural networks.\footnote{It would be natural to wonder in which situations Condition (D) is satisfied in the context of neural networks. This point is discussed quite in detail in Lemma 3.3 of Eckstein and Kupper \cite{eckstein2018computation}. It turns out that it suffices to have standard smoothness properties of the activation function $\varphi$ and at least one hidden layer in the architecture of the network.} The minimization problem in (\ref{final approximated superhedging}) is reduced to  finding the optimal parameters $\Theta_{m}$ for a set of neural networks:
\begin{equation*} 
\phi_{\theta, \gamma}^{m}(f) = \inf_{h \in \mathfrak{N}^{\varphi}_{L,m_0,m_{L+1}(\Theta_{m}), \, \lambda \in \mathbb{R}}}\left\{ \int_{\mathcal{X}} h \, d\mu + \int_{\mathcal{X}} \beta_{\gamma}(f-h-\lambda w)\, d\theta \right\}.
\end{equation*}
This last formulation is a finite dimensional problem and can be solved using (stochastic) gradient descent. For our applications, we will alternate the Adam optimizer (Kingma and Ba \cite{kingma2014adam}), which is an extension to stochastic gradient descent, with few iterations of gradient descent for optimizing only with respect to the Lagrange multipliers (associated with the constraints of the problem). 

In theory, as it is most often the case with neural networks, the objective function ends up being non-convex and the algorithm may only converge to a local minimum/maximum. However, the effectiveness of stochastic gradient descent (and its versions) has been extensively showcased in practice also for non-convex problems, making it one of the most widely used algorithm for deep learning.

\begin{algorithm}[!h]
\caption{Linearly constrained OT via neural nets:  ``alternating'' gradient descent optimization}
\begin{algorithmic}
	\STATE \textbf{Inputs:} marginal distributions $\mu_{1}, \mu_{2}$; reference measure $\theta$; target function $f$; constraint $w$; batch size $p$; penalty function $\beta_{\gamma}$; number of gradient descent iterations $n_{gc}$; hyper-parameters for the neural networks architecture $\Theta_{m}$. \\
	\textbf{Require:} random initialization of weights $\textbf{w}_{1}, \textbf{w}_{2}$; initialization of Lagrange multiplier $\lambda$.
	\WHILE{not converged}
	\STATE sample $\{x_i\}_{i=1}^{p} \sim \mu_{1}$;
	\STATE sample $\{y_i\}_{i=1}^{p} \sim \mu_{2}$;
	\STATE approximate  $h_1(\textbf{w}_{1}; x_1, x_2, \dots, x_p), \, h_2(\textbf{w}_{2}; y_1, y_2, \dots, y_p)$ via NNs;
	\STATE evaluate $\phi_{\theta,\gamma}^{m}(\textbf{w}_{1}, \textbf{w}_{2}, \lambda; f) = \int h_1 d\mu_1 + \int h_2 d\mu_2 + \int \beta_{\gamma}(f-h_1 - h_2-\lambda w) d\theta$;
	\STATE $ \textbf{w}_{1}, \textbf{w}_{2} \leftarrow \mbox{Adam}(\phi_{\theta,\gamma}^{m}(\textbf{w}_{1}, \textbf{w}_{2}, \lambda; f))$;
	\FOR{\_ in range $n_{gc}$}
	\STATE sample $\{x_i\}_{i=1}^{p} \sim \mu_{1}$;
	\STATE sample $\{y_i\}_{i=1}^{p} \sim \mu_{2}$;
	\STATE $\lambda \leftarrow \mbox{GradientDescent}(\phi_{\theta,\gamma}^{m}(\textbf{w}_{1},\textbf{w}_{2}, \lambda; f));$
	\ENDFOR
	\ENDWHILE
	\end{algorithmic}
\label{OT via NN: AGD}
\end{algorithm}

In the box above (Algorithm \ref{OT via NN: AGD}), we provide the pseudocode for the algorithm. For simplicity, therein we consider an optimal transport problem with only two marginals and one constraint. In a nutshell, the logic is the following: first, we sample batches from the marginal distributions; then, we approximate $h$ in (\ref{final approximated superhedging}) via suitable neural networks and optimise (maximise or minimise) the objective functional with respect to the weights of the neural networks. For each of these iterations, we perform a  few rounds of gradient descent in order to optimise with respect to the Lagrange multiplier. Then, if a stopping rule cannot be applied, the algorithm goes back to sampling and repeats the procedure.\footnote{Notice that, even for higher dimensions and more constraints, the logic of the algorithm stays exactly the same.}

\section{Arbitrage bounds of derivatives prices} \label{applications}
	In this section, we perform some numerical experiments as applications of the methodology presented above. All computations are performed in Python using TensorFlow. Concerning the neural networks architecture, we work with networks with 2 hidden layers, 128 nodes per hidden layer and rectified linear unit (ReLU) activation function. Also, we use an $L^{2}$ penalization function $\beta$ with $\gamma = 100$. The networks are trained with a batch of $2^{8}$ draws for 20,000 iterations, for each of which we perform 10 iterations of gradient descent, as explained in the previous section. For completeness, in Appendix \ref{further_numerics} we provide a sensitivity analysis with respect to the choice of $\gamma$ and the hyperparameters.

We develop two specific applications. In the first one, we  focus on basket options given related option prices.  In the second one, we provide bounds on Asian basket options. For a general framing of the problems that we are going to consider, see Section \ref{setting}. All examples are given for illustration purpose only and do not reflect  empirical distributional properties of asset prices and/or market data. Furthermore, for the ease of presentation, we assume a 0\% risk-free rate.

 Thus, let us consider a market with six assets $S_i, \, i = 1, \dots, 6$.  We assume that $S_{1} \sim \mathcal{N}(0,1)$, $S_{2} \sim \mathcal{N}(0,2)$, $S_{3} \sim \mathcal{N}(1,2)$, $S_{4} \sim \mathcal{N}(2,1),S_{5} \sim \mathcal{N}(1,1) $ and $S_{6} \sim \mathcal{N}(2,2)$, where $\mathcal{N}(\mu,\sigma)$ indicates a Gaussian probability distribution with mean $\mu$ and standard deviation $\sigma$.  Choosing normal distributions for assets' prices is not realistic but ensures that special limit cases can be studied in explicit form. Such results are useful to check the effectiveness of our approach.

\subsection{Basket options with additional information}

\subsubsection{4-basket option given the price of a 2-spread}

Let us start with computing bounds on a basket option written on four assets $S_{1},\, S_{2},\, S_{3}$ and $S_{4}$, given information on a spread option on two assets of the basket.  Denoting by $K$ the strike price, the price of the 4-basket option is then given by
\begin{equation*}
p_{4D} := \mathbb{E}\left[ \left(S_1 + S_2 + S_3 + S_4 - K \right)^{+}\right].
\end{equation*}
Also, the price of the spread option between $S_{4}$ and $S_{3}$ is known and given by
\begin{equation*}
p_{1} = \mathbb{E}\left[ \left(S_4 - S_3 \right)^{+}\right].
\end{equation*}
Observe that $p_{1}$ must approximately be within an interval $p_{1} \in [1.08,\, 1.76]$. Specifically, this interval corresponds to the lower and upper bounds on the price of a spread option on $S_4-S_3$ that are obtained with an antimonotonic and a comonotonic copula, respectively. For normal distributions, such bounds can be computed explicitly using the following expression:
\begin{equation}\label{e1}
\mathbb{E}\left[(X-a)^{+}\right]=(\mu-a)\Phi\left(\frac{\mu-a}{\sigma}\right)+\sigma\varphi\left(\frac{\mu-a}{\sigma}\right),
\end{equation}
where $\Phi$ denotes the cumulative density function (cdf) of a standard $\mathcal{N}(0,1)$ and $\varphi$ its probability density function (pdf). For $p_1$, the lower bound is obtained  when $S_{4}$ and $S_{3}$ are antimonotonic and it is then equal to $\mathbb{E}\left[(1-Z)^{+}\right] = \Phi(1)+\varphi(1)\approx 1.08$, where $Z\sim\mathcal{N}(0,1)$. The upper bound is equal to $\mathbb{E}\left[(3Z+1)^{+}\right] = \Phi(1/3)+3\varphi(1/3)\approx1.76.$ 

For an arbitrary  range of strikes $K \in \left[0, \, 0.7\right]$ and for a feasible price $p_{1} \in [1.08, \, 1.76],$ we display the upper and lower bounds for the price $p_{4D}$ of the basket option  obtained using our methodology in \mbox{Figure \ref{fig:4Db-1s}}. 
\begin{figure}[!h]
	\centering
\includegraphics[width=0.8\textwidth,scale = 0.6, height=8cm]{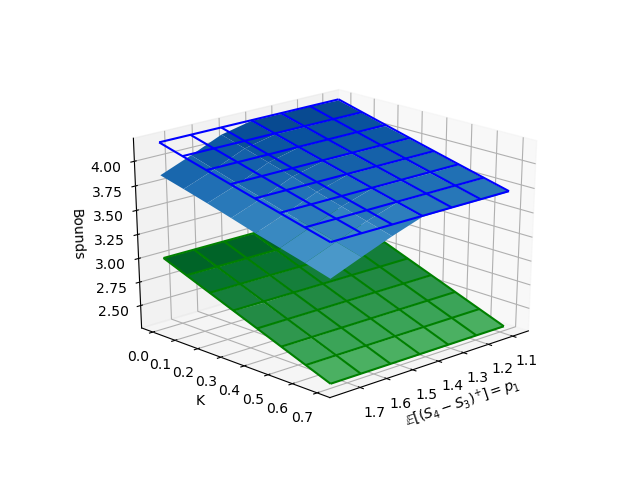} 
	\caption{ Bounds on 4-basket option with strike $K \in [0, 0.7]$, given the price of a 2-spread option.} \label{fig:4Db-1s}
\end{figure}

 Notice that the grids correspond to the unconstrained bounds (e.g., bounds computed without incorporating additional information to the information on the marginal distributions). The upper unconstrained bound is obtained by the comonotonic structure, which  in our example with normal distribution has an explicit formula as a function of $K$:
  \begin{equation*}
  \label{UBU}\mathbb{E}\left[(S_{1}^c+S_{2}^c+S_{3}^c+S_{4}^c-K)^{+}\right]=(3-K)\Phi\left(\frac{3-K}{6}\right)+6\varphi\left(\frac{3-K}{6}\right).
  \end{equation*}
 The lower unconstrained bound is obtained by deriving a dependence structure that leads to a minimum in convex order for the sum. In general, one can obtain an approximate solution by applying the RA (Rearrangement Algorithm of Puccetti and R\"uschendorf \cite{puccetti2012computation}). In the situation of this example, there exists a dependence such that the sum of the four assets is identically constant and equal to 3. Thus the minimum unconstrained bound is simply equal to $\label{LBU}(3-K)^{+}$.
 
 From Figure \ref{fig:4Db-1s}, we observe that the lower bound does not change if we include information on the spread between two assets of the basket. This is not surprising as it is possible to achieve complete mixability among the four assets by assuming that $S_{3}$ and $S_{4}$ are antimonotonic. Namely, let $Z\sim \mathcal{N}(0,1)$ and take $S_{4}=2+Z, \, S_{3}=1-2Z,\, S_{1}=-Z$ and $S_{2}=2Z$. On the other hand, the upper bound becomes lower when the price of the spread option becomes higher. Intuitively, an unconstrained upper bound on a basket option simply requires that all assets are in comonotonic order; however, a higher price (close to the upper bound) of a spread option  between two assets implies that these two assets are in antimonotonic order, reducing the (overall) maximum possible degree of comonotonicity. 

\subsubsection{4-basket option given the prices of two 2-baskets}

We now consider the same 4-basket option, given the prices of two basket options written on $S_{1}$ and $S_{2}$, and $S_{3}$ and $S_{4}$, respectively. We set again the strike $K = 0.1$.  
The prices of the baskets on $S_{1} + S_{2}$ and $S_{3}+S_{4}$ are given by
\begin{equation*}
\begin{split}
p_2 = \mathbb{E}\left[ \left(S_1 + S_2 \right)^{+}\right], \\
p_3 = \mathbb{E}\left[ \left(S_3 + S_4 \right)^{+}\right].
\end{split}
\end{equation*}
Explicit bounds for these prices are obtained using \eqref{e1}.  Specifically, for $p_2 \in [\varphi(0), \, 3\varphi(0)]\approx[0.40, \, 1.20]$ and $p_3 \in [3\Phi(3)+\varphi(3), \, 3\Phi(1)+3\varphi(1)]\approx [3.00, \, 3.25] $, we display upper and lower bounds for the 4-basket option price as in Figure \ref{fig:4Db-2b}. 

\begin{figure}[!h]
	\centering
	\includegraphics[width=0.7\textwidth,height=8cm]{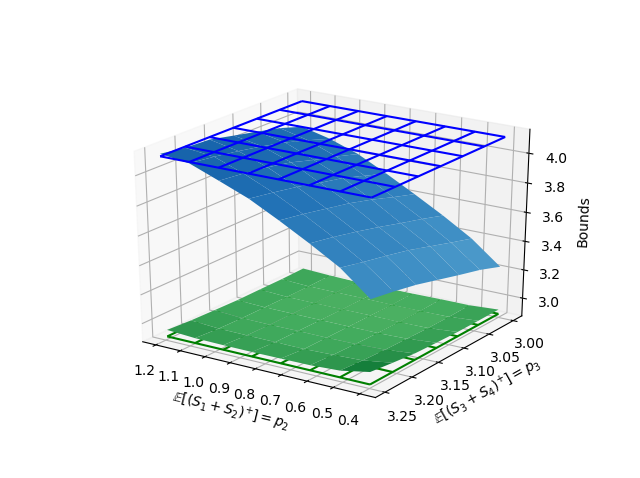} 
	\caption{ Bounds on a 4-basket option on $S_{1},\, S_{2},\, S_{3}, S_{4}$ with strike $K \in [0, 0.7]$, given the price of a 2-spread option on $S_{3}, S_{4}$. } \label{fig:4Db-2b}
\end{figure}

Note that we can similarly deal with the case of overlapping constraints, e.g., the case when the prices of two baskets on $S_{1}$ and $S_{2}$, and $S_{2}$ and $S_{4}$, respectively, are known. 



\subsubsection{5-basket option given the prices of a 3-basket and a 2-spread}
Let us increase slightly the dimensionality of the problem, and consider a basket option on five assets $S_1, S_2, S_3, S_4, S_5$ with strike $K$. As before, its price is given by  
\begin{equation*}
p_{5D} := \mathbb{E}\left[ \left(S_1 + S_2 + S_3 + S_4 + S_5 - K \right)^{+}\right].
\end{equation*}
Additionally, we set the prices of a 3-basket option on $S_{1} + S_{2} + S_{5}$ and a 2-spread option on $S_{4} - S_{3}$. These prices are given by 
\begin{equation*}
\begin{split}
& p_5 = \mathbb{E}\left[ \left(S_1 + S_2 + S_5 \right)^{+}\right], \\
& p_1 = \mathbb{E}\left[ \left(S_4 - S_3 \right)^{+}\right].
\end{split}
\end{equation*}
Here we set $p_{1} \in [1.08,\, 1.76]$ (see first example) and $p_5 \in [1,\, \Phi(1/4)+4\varphi(1/4)]\approx[1, 2.14]$. Notice that the lower bound of $p_{5}$ is simply obtained when the three assets are mixable: $S_{1}=Z, \, S_{2}=-2Z$ and $S_{5}=1+Z$, again for some $Z \sim \mathcal{N}(0,1)$; on the other hand, the upper bound corresponds to $\mathbb{E}\left[(1+4Z)^{+}\right]$ for which there is a closed-form expression in \eqref{e1}. 
Thus, still for $K = 0.1$, we obtain upper and lower bounds on the 5-basket option as in Figure \ref{fig:5Db-1s1b}.

\begin{figure}[!h]
		\centering
\includegraphics[width=0.8\textwidth,scale = 0.6]{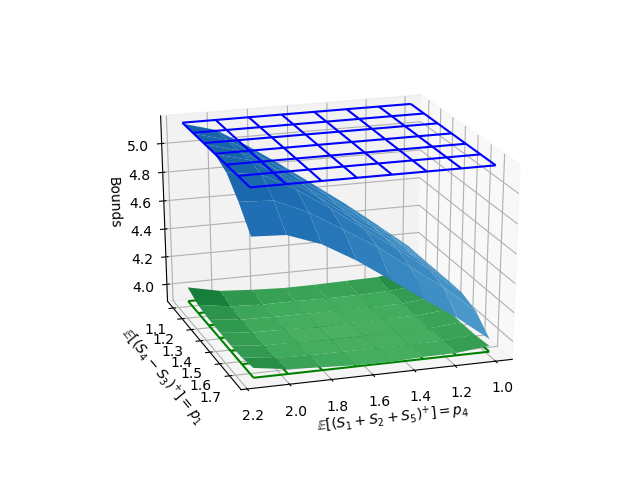} 
	\caption{ Bounds on a 5-basket option on $S_{1},\, S_{2},\, S_{3}, \, S_{4}, \,S_{5}$ with strike $K = 0.1$, given the price of a 3-basket option on $S_{1}, \, S_{2}, \, S_{5} $ and a spread option on $S_{3}, S_{4}$.} \label{fig:5Db-1s1b}
\end{figure}

Also in this case, we display as well the unconstrained bounds, which only account for the information on the marginal distributions of the five assets. These bounds can again be computed  explicitly and we do not need the RA. For the unconstrained  lower bound, a basket option with strike $K$ on the sum of the five assets is always larger than $(4-K)^+$, that is obtained in the very special case in which the dependence is for instance driven by two independent random variables $\mathcal{N}(0,1)$, denoted as $Z_1$ and $Z_2$, by constructing the five assets as follows: $S_{1}=Z_1$, $S_{2}=2Z_1$, $S_{3}=1-2Z_1$, $S_{4}=2-\frac{1}{2}Z_1+\frac{\sqrt{3}}{2} Z_2$, and  $S_{5}=2-\frac{1}{2}Z_1-\frac{\sqrt{3}}{2}Z_2$. We can then simply compute the lower bound when $K=0.1$, that is 3.9. For the upper bound it is also straightforward to obtain an explicit expression by using \eqref{e1}:
$$ \mathbb{E}\left[(4+7Z-K)^{+}\right]=(4-K)\Phi\left(\frac{4-K}{7}\right)+7\varphi\left(\frac{4-K}{7}\right).$$
To obtain the unconstrained upper bound in Figure  \ref{fig:5Db-1s1b}, one can simply replace $K$ by its value 0.1.

In Figure  \ref{fig:5Db-1s1b},  we notice that the upper bound is most tightened when the price of the basket on $S_{1} + S_{2} + S_{5}$ is close to its lower bound and the price of the spread on $S_{4} - S_{3}$ is close to its upper bound (which happens when the two assets are in antimonotonic order).

In fact, the bounds on the 5-basket when the basket and the spread options are both equal to their respective upper bounds can be computed explicitly. In this case,
$$\mathbb{E}\left[(4+3Z-K)^+\right]\leqslant \, p_{5D} \, \leqslant \mathbb{E}\left[(4+5Z-K)^+\right].$$

Similarly, when the basket is at the upper bound and the spread is at its lower bound, we have
$$\mathbb{E}\left[(4+Z-K)^{+}\right]  \leqslant \, p_{5D} \leqslant \, \mathbb{E}\left[(4+7Z-K)^+\right].$$

All these bounds are explicit and can be computed by means of \eqref{e1}. We have checked that they are approximately consistent with the output of the algorithm.

\subsubsection{5-basket option given the prices of two 2-spreads}
As a final example for this section, let us consider a similar 5-basket option on $S_{1},\, S_{2},\, S_{3}, \, S_{4}, \,S_{6}$,  given prices of a 2-spread option on $S_{2} - S_{1}$ and a 2-spread option on $S_{4} - S_{3}$. These prices are given by 
\begin{equation*}
\begin{split}
&p_6 = \mathbb{E}\left[ \left(S_2 - S_1\right)^{+}\right], \\
&p_1 = \mathbb{E}\left[ \left(S_4 - S_3 \right)^{+}\right].
\end{split}
\end{equation*}
For  $p_{1} \in [1.08,1.76]$ and $p_6 \in [\varphi(0), 3\varphi(0)]=[\frac{1}{\sqrt{2\pi}}, \frac{3}{\sqrt{2\pi}}] \approx[0.40, 1.20] $, we obtain upper and lower bounds as in \mbox{Figure \ref{fig:5Db-2s}}.
\begin{figure}[!h]
		\centering
\includegraphics[width=0.8\textwidth,scale = 0.6]{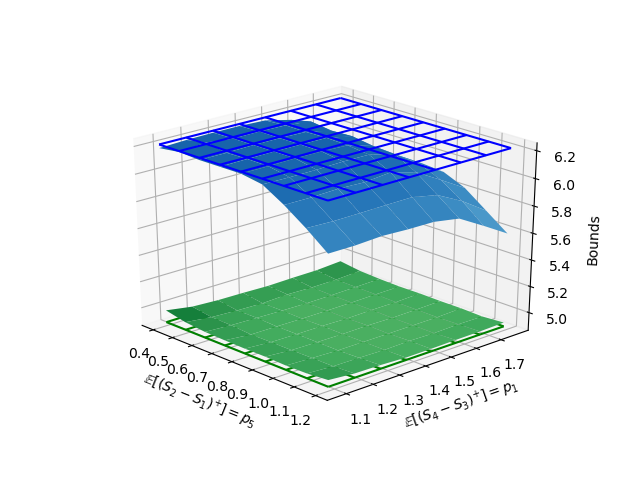} 
	\caption{  Bounds on a 5-basket option on $S_{1},\, S_{2},\, S_{3}, \, S_{4}, \,S_{6}$ with strike $K = 0.1$, given the prices of two 2-spread option on $S_{1}, \, S_{2}$ and $S_{3}, S_{4}$ respectively.} \label{fig:5Db-2s}
\end{figure}
In this case, the constrained upper bound appears to be tightened when the spread options approach their upper bounds and almost exactly as the unconstrained upper bound when the spread options both move towards their lower bounds.

\newpage
\subsection{Asian-style basket options \label{PaDe}}
We end the numerical section by an illustration of the methodology on a weakly path-dependent option, e.g., a payoff that depends on underlying assets at a finite number of past dates. The additional difficulty is to include the martingale condition, that is, to ensure that the marginal distribution of the asset at each intermediary date satisfies a martingale condition.   Specifically, we illustrate the study with an Asian-style option in which the underlying is the sum of the basket price at some intermediary date and the basket price at maturity of the option. By dividing this payoff by two, one would get an option on the arithmetic average of the basket, and thus an Asian option.

\subsubsection{Asian (2+2)-basket option given the price of a 2-spread}
Let us assume a two-period financial market and consider two assets $S_1$ and $S_2$. Denoting by $S_{i,T}$ the distributions of asset $i=1,2$ at time $T=1$ and $T=2$, respectively, we set the following: $S_{1,1} \sim \mathcal{N}(0,1), \, S_{1,2} \sim \mathcal{N}(0,2)$ and $S_{2,1} \sim \mathcal{N}(1,1), \, S_{2,2} \sim \mathcal{N}(1,2)$. Also, we fix $K=2$. We are interested in deriving bounds on the Asian (2+2)-basket option given by 
\begin{equation*}
p_{2D}^{Asian} = \mathbb{E}\left[\left( S_{1,1} + S_{1,2} + S_{2,1} + S_{2,2} - K \right)^{+} \right].
\end{equation*}
We include information on the price of a 2-spread option on $S_1$ and $S_2$ at the intermediary date $T=1$, i.e., the underlying is  $S_{2,1} - S_{1,1}$:
\begin{equation*}
p_7 = \mathbb{E}\left[\left( S_{2,1} - S_{1,1}  \right)^{+} \right].
\end{equation*}
Due to standard financial arguments, $S_{1}$ and $S_{2}$ need to be martingales, so the marginal distributions at time $T=1,2$ are set to be in increasing convex order. Recall that, by Strassen \cite{strassen1965existence}, there exists a martingale $(M_1,M_2)$ such that $M_1 \sim \mu_{1}$ and $M_2 \sim \mu_{2}$  if and only if measures $\mu_1$ and $\mu_{2}$ are in convex order. This adds a further constraint to the problem, as of course now the optimal coupling needs to satisfy these martingality conditions. In practice, such condition is imposed by applying Lemma 2.3 in Beiglbock et al. \cite{beiglbock2013model}, which we report here for completeness:
\begin{lem}\label{LEM2}
Let $\pi \in \Pi(\mu_{1}, \dots, \mu_{n})$. Then the following statements are equivalent: 
\begin{itemize}
\item[1.] $\pi \in \mathcal{M}(\mu_{1}, \dots, \mu_{n})$, where $\mathcal{M}(\mu_{1}, \dots, \mu_{n})$ denotes the set of martingales with marginal distributions $\mu_{1}, \dots, \mu_{n}$;
\item[2.] For every $i \leq j \leq n-1$ and for every continuous bounded function $\Delta : \mathbb{R}^{j} \to \mathbb{R}$, we have
\begin{equation}\label{penalty_martingale}
\int_{\mathbb{R}^{n}}\Delta(x_{1},\dots,x_{j})(x_{j+1}-x_{j})d\pi(x_{1},\dots,x_{n})=0.
\end{equation}
\end{itemize}  
\end{lem}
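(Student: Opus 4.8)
The plan is to identify $\pi$ with the law of the coordinate process $(X_{1},\dots,X_{n})$ on $(\mathbb{R}^{n},\mathcal{B}(\mathbb{R}^{n}),\pi)$ and to recognise that membership in $\mathcal{M}(\mu_{1},\dots,\mu_{n})$ is exactly the statement that $(X_{j})$ is a martingale for its natural filtration $\mathcal{F}_{j}:=\sigma(X_{1},\dots,X_{j})$, i.e. $\mathbb{E}_{\pi}[X_{j+1}\mid\mathcal{F}_{j}]=X_{j}$ for each $j\leq n-1$. Since the marginals are in convex order they have finite first moment, so each $X_{i}\in L^{1}(\pi)$ and every conditional expectation below is well defined; moreover, by the tower property the one-step relations $\mathbb{E}_{\pi}[X_{j+1}\mid\mathcal{F}_{j}]=X_{j}$ are equivalent to the full martingale property $\mathbb{E}_{\pi}[X_{k}\mid\mathcal{F}_{j}]=X_{j}$ for $k>j$. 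Hence it suffices to characterise the one-step relations.

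For $(1)\Rightarrow(2)$, fix $j$ and a bounded continuous $\Delta:\mathbb{R}^{j}\to\mathbb{R}$. As $\Delta(X_{1},\dots,X_{j})$ is bounded and $\mathcal{F}_{j}$-measurable, I would pull it inside the conditional expectation and use $\mathbb{E}_{\pi}[X_{j+1}-X_{j}\mid\mathcal{F}_{j}]=0$:
\[
\int_{\mathbb{R}^{n}}\Delta(x_{1},\dots,x_{j})(x_{j+1}-x_{j})\,d\pi
=\mathbb{E}_{\pi}\big[\Delta(X_{1},\dots,X_{j})\,\mathbb{E}_{\pi}[X_{j+1}-X_{j}\mid\mathcal{F}_{j}]\big]=0 .
\]
This direction is immediate once integrability is in place.

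The content lies in $(2)\Rightarrow(1)$. Fix $j$, let $\pi_{j}:=({\Pr}_{1,\dots,j})_{\#}\pi$ be the marginal of $\pi$ on the first $j$ coordinates, and set $g(x_{1},\dots,x_{j}):=\mathbb{E}_{\pi}[X_{j+1}-X_{j}\mid X_{1}=x_{1},\dots,X_{j}=x_{j}]$, a version of the conditional expectation lying in $L^{1}(\pi_{j})$ by integrability. Then for every bounded continuous $\Delta$,
\[
0=\int_{\mathbb{R}^{n}}\Delta\,(x_{j+1}-x_{j})\,d\pi=\int_{\mathbb{R}^{j}}\Delta\,g\,d\pi_{j},
\]
so the finite signed measure $\nu:=g\,\pi_{j}$ integrates to zero against every bounded continuous function. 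Because $\mathbb{R}^{j}$ is Polish, bounded continuous functions form a measure-determining class (one approximates $\mathbf{1}_{A}$ for Borel $A$ in $L^{1}(|\nu|)$ by continuous functions, via regularity of finite Borel measures or Lusin's theorem), hence $\nu=0$ and therefore $g=0$ $\pi_{j}$-a.s. This is precisely $\mathbb{E}_{\pi}[X_{j+1}\mid\mathcal{F}_{j}]=X_{j}$ a.s., so $\pi\in\mathcal{M}(\mu_{1},\dots,\mu_{n})$.

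The main obstacle is the measure-theoretic step in $(2)\Rightarrow(1)$: upgrading \emph{vanishes against all bounded continuous test functions} to $g=0$ almost everywhere. This requires that bounded continuous functions determine finite signed Borel measures on $\mathbb{R}^{j}$, together with the $L^{1}$ integrability that makes $g\,\pi_{j}$ a genuine finite signed measure. A secondary point to handle carefully is the reduction to one-step relations and the justification that \emph{bounded} $\Delta$ already suffice (rather than unbounded test functions) — which is exactly why boundedness of $\Delta$ is assumed and why the $L^{1}$ framework for the increments is the natural one.
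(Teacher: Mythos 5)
Your proof is correct. Note that the paper itself gives no proof of this lemma --- it is quoted verbatim (``which we report here for completeness'') from Lemma 2.3 of Beiglb\"ock et al.\ \cite{beiglbock2013model} --- and your argument is essentially the standard one behind that result: the tower property gives $(1)\Rightarrow(2)$, and for $(2)\Rightarrow(1)$ you represent the conditional expectation of the increment as a density $g$ against the marginal $\pi_j$ and use that bounded continuous functions are measure-determining for finite signed Borel measures on a Polish space (equivalently, a monotone class argument upgrading continuous bounded test functions to indicators). The only slight imprecision is attributing integrability of the coordinates to convex order: the lemma as stated assumes only $\pi\in\Pi(\mu_1,\dots,\mu_n)$, and the finite first moments needed for \eqref{penalty_martingale} and for the martingale property to make sense are an implicit standing assumption on the marginals (automatic in the paper's Gaussian examples) rather than a consequence of convex order, which is itself a notion formulated for integrable measures.
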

Notice that now $\Delta(x_{1},\dots,x_{j})$ can also be approximated via neural networks and (\ref{penalty_martingale}) can be enforced by penalization, as shown previously.  In our case, $n=2$ and thus Lemma \ref{LEM2} simplifies into one condition that for every bounded function $\Delta : \mathbb{R}\to \mathbb{R}$, we have
\begin{equation*}\label{penalty_martingaleSpecial}
\int_{\mathbb{R}^{2}}\Delta(x_{1},x_{2})(x_{2}-x_{1})d\pi(x_{1},x_{2})=0.
\end{equation*}
We then approximate $\Delta$ via neural networks and the previous methodology applies.

\begin{figure}[!h]
	\centering
	\includegraphics[width=1\textwidth,height=6cm]{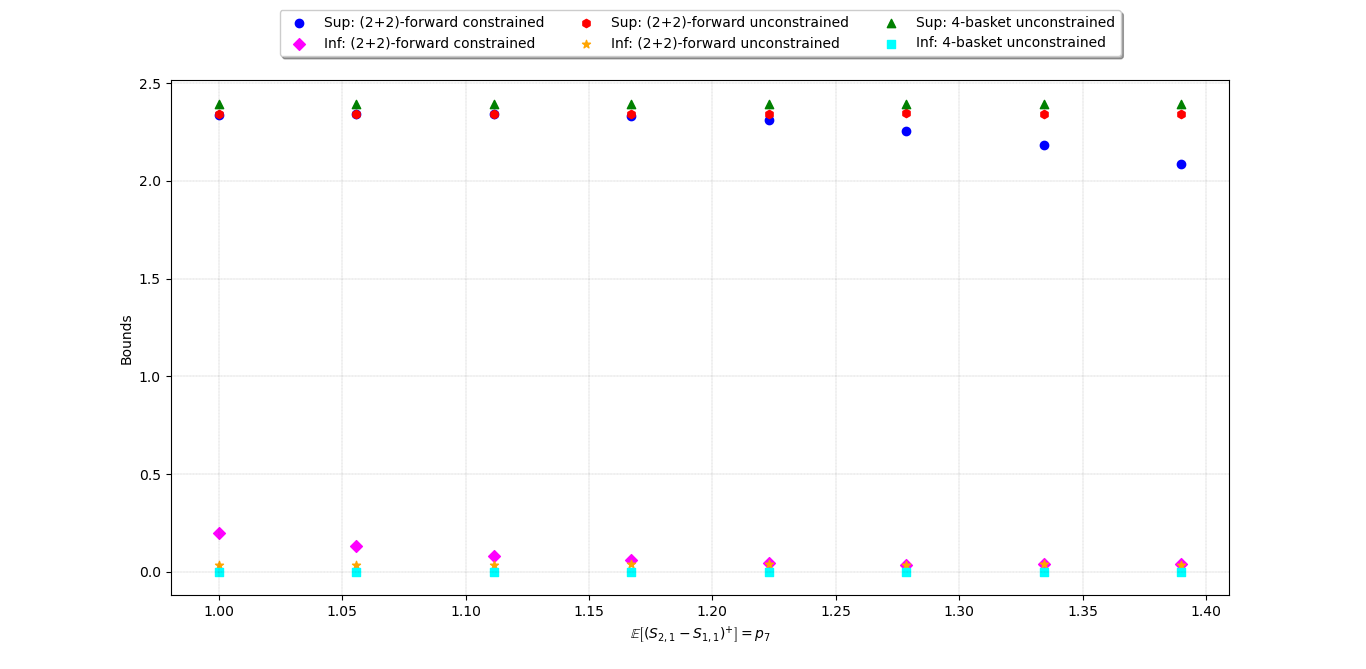} 
	\caption{ Bounds on a Asian-style (2+2)-basket option on $S_{1,1}, \, S_{1,2}, S_{2,1}, \, S_{2,2}$ with strike ${K = 0.1}$, given the price of a 2-spread option on $S_{1,1}$ and $S_{2,1}$.} 
	\label{fig:forward}
\end{figure}

Thus, for $p_7 \in \left[1, \Phi(1/2) + 2\phi(1/2) \right] \approx \left[1, \, 1.39 \right] $, we obtain upper and lower bounds for the Asian (2+2)-basket option as in Figure \ref{fig:forward}.

The red circles represent the upper bounds on the Asian basket option. The green empty circles represent the unconstrained upper bounds (assuming no further information on $p_7$ and no martingale constraints (it can be thought as a simple 4-basket)), the red circles are the upper bounds without martingale constraints but assuming knowledge on $p_7$ (constrained 4-basket). We reach similar conclusions  for the lower bounds. As expected, upper bounds from the constrained, martingale OT problem are below (even if just slightly) the ones from the constrained OT problem. 

Additional examples of bounds on path-dependent derivatives can be found in Eckstein et al. \cite{eckstein2019robust}. 

\section*{Conclusive remarks} In this paper, we show how the approach by Eckstein and Kupper \cite{eckstein2018computation} can also be applied in a very flexible and effective way to  problems related to the computation of bounds on multi-asset derivatives when further information on the assets is known, e.g., via prices of derivatives already traded in the market. In most cases, the computation of such bounds cannot be done explicitly and other algorithms (as, for instance, the Rearrangement Algorithm) do not allow to incorporate this type of constraints. 

Our experiments show that, by means of Algorithm \ref{OT via NN: AGD}, we can obtain reliable price bounds in many different settings.  However, we point out that, due to the approximation of the dual objective in (\ref{final_relaxation}), the actual joint distribution achieving such bounds can only be obtained in an approximated form via (\ref{joint_distribution}) (see De Gennaro Aquino and Eckstein \cite{aquino2020minmax} for a study on primal-dual methods for optimal transport that can be used to address this issue).

On a further note, since our framework allows to detect arbitrage opportunities in the multi-asset derivatives market, we expect that more practical applications can be developed by using all the information available in the market at some given point in time. We leave as an avenue for future research the investigation of how such arbitrage opportunities can be exploited.

\newpage 
\bibliographystyle{siam}
\bibliography{Bounds_NN-references}

\newpage
\appendix
\begin{appendices}
\section{Duality results} \label{appendix_Kantorovic}
Kantorovich duality has been stated and proved in several settings. Here, following again Zaev \cite{zaev2015monge}, we report details and preliminary results for the proof of the case with additional linear constraints examined in Section \ref{MK with constraints}.

\subsection{Well-definiteness of the seminorm $\Vert \cdot \Vert_{L}$} \label{well-definiteness}
Firstly, we need to prove that the seminorm $\Vert \cdot \Vert_{L}$ in (\ref{seminorm}) is well-defined, as by the following definition: 
\begin{definition}
	A seminorm on a generic vector space $V$ is a function $\Vert \cdot \Vert : V \to \mathbb{R}_{+}$ such that the following properties are satisfied: for all $v, w \in V$, and any scalar $a$,
	\begin{itemize}
		\item $\Vert v \Vert \geq 0$ (non-negativity),
		\item $\Vert a \, v \Vert = \vert a\vert \, \Vert v \Vert $ (absolute homogeneity), and
		\item $\Vert v + w \Vert \leq \Vert v  \Vert + \Vert w \Vert$ (subadditivity).
	\end{itemize} 
\end{definition} 
The following proposition then holds.
\begin{proposition} 
	$\Vert \cdot \Vert_{L}$ is a well-defined seminorm on $C_{L}(\mu)$.
\end{proposition}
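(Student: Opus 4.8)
The plan is to verify finiteness first and then the three defining properties of a seminorm in turn. The crux is to show that $\Vert c \Vert_{L} < \infty$ for every $c \in C_{L}(\mu)$; the three algebraic properties will then follow routinely from the monotonicity and linearity of the integral together with standard facts about suprema.

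For finiteness, I would exploit the domination built into the very definition of $C_{L}(\mu)$: any $c \in C_{L}(\mu)$ admits some $h \in \mathcal{H}$ with $\vert c \vert \leq h$, where $h(x_1,\dots,x_n) = \sum_{i=1}^{n} h_i(x_i)$ and each $h_i \in C_{L}(\mu_i) \subset L^{1}(\mathcal{X}_i,\mu_i)$. For an arbitrary $\pi \in \Pi(\mu)$, monotonicity of the integral gives $\int_{\mathcal{X}} \vert c \vert \, d\pi \leq \int_{\mathcal{X}} h \, d\pi$, and since $\pi$ has marginals $\mu_i$ and each summand $h_i$ depends only on the $i$-th coordinate, the right-hand side collapses to $\sum_{i=1}^{n} \int_{\mathcal{X}_i} h_i \, d\mu_i$. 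This bound is independent of $\pi$ and finite because each $h_i$ is $\mu_i$-integrable; taking the supremum over $\pi \in \Pi(\mu)$ yields $\Vert c \Vert_{L} \leq \sum_{i=1}^{n} \int_{\mathcal{X}_i} h_i \, d\mu_i < \infty$. This is the step I expect to carry the real content of the proposition: everything hinges on the fact that the dominating function splits into marginally integrable pieces, so that its $\pi$-integral is controlled \emph{uniformly} over all couplings with the prescribed marginals.

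With finiteness in hand, non-negativity is immediate since $\int_{\mathcal{X}} \vert c \vert \, d\pi \geq 0$ for every $\pi$. For absolute homogeneity, for any scalar $a$ I would use $\vert a c \vert = \vert a \vert \, \vert c \vert$ pointwise, pull $\vert a \vert$ out of each integral and then out of the supremum, giving $\Vert a c \Vert_{L} = \vert a \vert \, \Vert c \Vert_{L}$ (the case $a = 0$ being trivial). For subadditivity, the pointwise triangle inequality $\vert c_1 + c_2 \vert \leq \vert c_1 \vert + \vert c_2 \vert$ yields, for each fixed $\pi$, the chain $\int_{\mathcal{X}} \vert c_1 + c_2 \vert \, d\pi \leq \int_{\mathcal{X}} \vert c_1 \vert \, d\pi + \int_{\mathcal{X}} \vert c_2 \vert \, d\pi \leq \Vert c_1 \Vert_{L} + \Vert c_2 \Vert_{L}$; since the right-hand side no longer depends on $\pi$, taking the supremum over $\pi$ on the left gives $\Vert c_1 + c_2 \Vert_{L} \leq \Vert c_1 \Vert_{L} + \Vert c_2 \Vert_{L}$.

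Finally, I would note for completeness that these manipulations presuppose that $C_{L}(\mu)$ is a vector space, which I would check at the outset: if $\vert c_j \vert \leq h^{(j)} \in \mathcal{H}$ for $j = 1,2$, then $\vert c_1 + c_2 \vert \leq h^{(1)} + h^{(2)}$ and $\vert a c_1 \vert = \vert a \vert \, \vert c_1 \vert \leq \vert a \vert \, h^{(1)}$, and since $\mathcal{H}$ is closed under sums and scalar multiples (each $C_{L}(\mu_i)$ being itself a vector space), the dominating functions again lie in $\mathcal{H}$. No compactness or duality input is needed at this stage; the only genuine subtlety is the uniform-in-$\pi$ finiteness above, which rests entirely on the marginal structure of $\Pi(\mu)$.
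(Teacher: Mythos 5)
Your proof is correct and follows essentially the same route as the paper's: finiteness via the dominating function $h = \sum_i h_i \in \mathcal{H}$ whose $\pi$-integral collapses to $\sum_{i=1}^{n}\int_{\mathcal{X}_i} h_i\, d\mu_i$ uniformly over $\pi \in \Pi(\mu)$, followed by the standard supremum manipulations for non-negativity, homogeneity, and subadditivity. The only difference is that you spell out the marginal-collapse step and the vector-space structure of $C_L(\mu)$ explicitly, which the paper leaves implicit.
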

\begin{proof}
	It is easy to check that $\Vert \cdot \Vert_{L}$ is finite for all $c \in C_{L}(\mu)$, as
	\begin{equation*}
	\sup_{\pi \in \Pi(\mu)}\int_{\mathcal{X}} \vert c \vert d\pi \leq \sum_{i=1}^{n}\int_{\mathcal{X}_{i}}h_{i}\, d\mu_{i} < \infty,
	\end{equation*}
	non-negative and absolutely homogeneous. It also holds that, for any $c,g \in C_{L}(\mu)$,
	\begin{equation*}
	\Vert c + g \Vert_{L} = \sup_{\pi \in \Pi(\mu)}\int_{\mathcal{X}}  \vert  c +g   \vert d\pi \leq \sup_{\pi \in \Pi(\mu)}\int_{\mathcal{X}} \vert  c    \vert d\pi + \sup_{\pi \in \Pi(\mu)}\int_{\mathcal{X}} \vert  g   \vert d\pi = \Vert c \Vert_{L} + \Vert g \Vert_{L},
	\end{equation*}
	proving that $\Vert \cdot \Vert_{L}$ is subadditive.
\end{proof}

\subsection{Compactness and continuity} \label{compactness-continuity}
To prove compactness of $\Pi_{W}(\mu)$, we recall the following result by Prokhorov \cite{prokhorov1956convergence}:
\begin{lem}
Let $\mathcal{X}$ be a Polish space. A set $\mathcal{P} \in \mathcal{P}(\mathcal{X})$ is precompact\footnote{A subset $S$ of a metric space is called \textit{precompact} or \textit{relatively compact} if its closure $\bar{S}$ is compact.} in the weak topology if and only if it is tight, that is, for every $\varepsilon > 0$ there exists a compact subset $K_{\varepsilon}$ such that $\mu[\mathcal{X} \backslash K_{\varepsilon}] \leq \varepsilon$ for all $\mu\in \mathcal{P}$.
\end{lem}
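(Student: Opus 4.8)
The plan is to establish both implications of this (Prokhorov's) theorem separately, exploiting that a Polish space is completely metrizable and separable. Throughout I would fix a complete metric $d$ inducing the topology of $\mathcal{X}$ and rely on the Portmanteau characterization of weak convergence, namely that $\mu_n \to \mu$ weakly if and only if $\liminf_n \mu_n(U) \geq \mu(U)$ for every open $U$ (equivalently $\limsup_n \mu_n(F) \leq \mu(F)$ for every closed $F$).

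For the direction \emph{precompact $\Rightarrow$ tight}, the idea is to use separability and completeness to manufacture compact sets carrying almost all mass uniformly over the family $\mathcal{P}$. Fix $\varepsilon > 0$. For each $n$ I would show there exist finitely many open balls of radius $1/n$ whose union $U_n$ satisfies $\mu(U_n) > 1 - \varepsilon 2^{-n}$ for all $\mu \in \mathcal{P}$. If this failed for some $n$, then enumerating a countable base of $1/n$-balls as $G_1, G_2, \dots$ and setting $U_m := G_1 \cup \cdots \cup G_m \uparrow \mathcal{X}$, one could pick $\mu_m \in \mathcal{P}$ with $\mu_m(U_m) \leq 1 - \varepsilon 2^{-n}$; a weakly convergent subsequence $\mu_{m_l} \to \mu^{*}$ (by precompactness) would satisfy, via the open-set Portmanteau inequality, $\mu^{*}(U_m) \leq \liminf_l \mu_{m_l}(U_m) \leq 1 - \varepsilon 2^{-n}$ for every fixed $m$, and letting $m \to \infty$ by continuity from below would force $\mu^{*}(\mathcal{X}) \leq 1 - \varepsilon 2^{-n} < 1$, contradicting that $\mu^{*}$ is a probability measure. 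Setting $K := \overline{\bigcap_n U_n}$ then yields a closed, totally bounded — hence, by completeness, compact — set with $\mu(\mathcal{X} \setminus K) \leq \sum_n \varepsilon 2^{-n} = \varepsilon$ for all $\mu \in \mathcal{P}$, which is exactly tightness.

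For the converse, \emph{tight $\Rightarrow$ precompact}, I would reduce to the compact case by embedding. Every Polish space is homeomorphic to a subset of the Hilbert cube $[0,1]^{\mathbb{N}}$, so I fix such an embedding into a compact metric space $\widehat{\mathcal{X}}$ and view each $\mu \in \mathcal{P}$ as a Borel measure on $\widehat{\mathcal{X}}$. Since $C(\widehat{\mathcal{X}})$ is separable, the unit ball of its dual is weak-$*$ sequentially compact (Banach–Alaoglu together with metrizability of the weak-$*$ topology on bounded sets), so any sequence $(\mu_n) \subset \mathcal{P}$ has a subsequence converging weak-$*$ to a positive measure $\nu$ on $\widehat{\mathcal{X}}$ with $\nu(\widehat{\mathcal{X}}) \leq 1$. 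The crux is then to show, using tightness, that $\nu$ is carried by $\mathcal{X}$ with $\nu(\mathcal{X}) = 1$: for each $j$ choose a compact $K_j \subset \mathcal{X}$ with $\mu(\mathcal{X} \setminus K_j) \leq 1/j$ for all $\mu \in \mathcal{P}$; as $K_j$ is compact, hence closed, in $\widehat{\mathcal{X}}$, the closed-set Portmanteau inequality gives $\nu(K_j) \geq \limsup_n \mu_n(K_j) \geq 1 - 1/j$, whence $\nu(\mathcal{X}) = 1$ and $\nu \in \mathcal{P}(\mathcal{X})$. A final check, testing against sets of the form $U \cap \mathcal{X}$ with $U$ open in $\widehat{\mathcal{X}}$, upgrades weak-$*$ convergence on $\widehat{\mathcal{X}}$ to weak convergence in $\mathcal{P}(\mathcal{X})$, giving relative sequential compactness and hence precompactness.

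I expect the main obstacle to be this last direction, and specifically the step showing that the weak-$*$ limit $\nu$ assigns no mass to $\widehat{\mathcal{X}} \setminus \mathcal{X}$. Without tightness the limit can leak mass onto the points added by the compactification (equivalently, the $C_b$-dual limit may fail to be a countably additive probability measure), so tightness is precisely what rules this out; carrying this out cleanly requires the interplay between the closed-set Portmanteau inequality and the uniform compact exhaustion $\{K_j\}$.
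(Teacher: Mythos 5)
Your proof is essentially correct, but note that the paper itself does not prove this lemma at all: it is stated as a recalled result and attributed to Prokhorov \cite{prokhorov1956convergence}, then used as a black box to get compactness of $\Pi_W(\mu)$ (only the direction tight $\Rightarrow$ precompact is actually needed there). So there is no in-paper argument to compare against; what you have written is the standard textbook proof of Prokhorov's theorem, and both directions go through. In the first direction, the contradiction argument (partial unions $U_m \uparrow \mathcal{X}$, extraction of a weak limit $\mu^{*}$, open-set Portmanteau, continuity from below) and the construction $K = \overline{\bigcap_n U_n}$ (closed and totally bounded, hence compact by completeness) are exactly right. In the second direction, the compactification route (Hilbert-cube embedding, Banach--Alaoglu with separability of $C(\widehat{\mathcal{X}})$, closed-set Portmanteau against the uniform compact exhaustion $\{K_j\}$) is also sound; testing against the constant function $1$ in fact gives $\nu(\widehat{\mathcal{X}})=1$ outright, not just $\leq 1$. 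Two small points you gloss over but should make explicit if writing this up: (i) extracting convergent subsequences and identifying sequential relative compactness with precompactness uses metrizability of the weak topology on $\mathcal{P}(\mathcal{X})$, which holds because $\mathcal{X}$ is separable metric; (ii) the limit $\nu$ should be restricted to $\mathcal{X}$ via the $\sigma$-compact set $\bigcup_j K_j \subset \mathcal{X}$ (which carries full $\nu$-mass), sidestepping any question of whether $\mathcal{X}$ itself is $\nu$-measurable in $\widehat{\mathcal{X}}$; this makes the restriction well defined on the trace Borel $\sigma$-algebra. With those clarifications your argument is complete and self-contained, which is strictly more than the paper provides.
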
  
By Lemma 4.4 in Villani \cite{villani2008optimal}, we know that $\Pi(\mu)$ is tight in $\mathcal{P}(\mathcal{X})$ and then, by Prokhorov's theorem, that it is relatively compact in the topology of weak convergence. In order to show that $\Pi_{W}(\mu)$ is also compact in such topology, it takes to show that is closed. By definition,  $\Pi_{W}(\mu)$ is the intersection of the sets $\left\{ \pi \in \Pi(\mu): \int_{\mathcal{X}}w \, d\pi = 0  \right\}$, for some function $w \in W \subset C_{L}(\mu)$. By Lemma 4.3 in Villani \cite{villani2008optimal}, each of these sets is closed, which implies that $\Pi_{W}(\mu)$ is also closed.

Now, let $C_{b}(\mathcal{X})$ denote the space of bounded continuous functions on $\mathcal{X}$. To prove lower semi-continuity of the functional $\pi \to \int_{\mathcal{X}} c \, d\pi$, we need to check that for any sequence of transport plans $\{\pi_{k}\}$ such that $\lim_{k}\int_{\mathcal{X}} \rho d\pi_{k} = \int_{\mathcal{X}} \rho d\pi$ for any $\rho \in C_{b}(\mathcal{X})$, we have $\lim_{k} \int_{\mathcal{X}} c d\pi_{k} = \int_{\mathcal{X}} c d\pi$. Before doing that, we need the following lemma, which shows that $C_{b}(\mathcal{X})$ is dense in $C_{L}(\mu)$.
\begin{lem}
	$C_b (\mathcal{X})$ is dense in $C_{L}(\mu)$ with respect to the seminorm  $\Vert \cdot \Vert_{L}$.
\end{lem}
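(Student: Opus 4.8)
The plan is to prove density by truncation: given $c \in C_L(\mu)$ I will approximate it in the seminorm $\Vert\cdot\Vert_L$ by its truncations $c_N := (c \wedge N)\vee(-N)$, which are continuous and bounded and hence belong to $C_b(\mathcal{X})$. The entire statement then reduces to showing that $\Vert c - c_N\Vert_L \to 0$ as $N \to \infty$.

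First I would fix a dominating function. By the definition of $C_L(\mu)$ there is $h = \sum_{i=1}^{n} h_i \in \mathcal{H}$ with $|c| \leq h$; set $g(x_1,\dots,x_n) := \sum_{i=1}^{n} |h_i(x_i)|$. Then $g$ is continuous, $g \geq h \geq |c| \geq 0$, and since each $|h_i| \in L^1(\mathcal{X}_i,\mu_i)$ every $\pi \in \Pi(\mu)$ integrates $g$ to the same finite constant $C' := \sum_i \int_{\mathcal{X}_i} |h_i|\, d\mu_i$. Because $|c - c_N| = (|c|-N)^{+} \leq (g-N)^{+} \leq g\,\mathds{1}_{\{g>N\}}$, I obtain the uniform bound $\Vert c - c_N\Vert_L \leq \sup_{\pi\in\Pi(\mu)} \int_{\mathcal{X}} g\,\mathds{1}_{\{g>N\}}\, d\pi$, so everything comes down to a uniform-integrability statement for the family $\{g\cdot\pi : \pi\in\Pi(\mu)\}$.

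To establish that uniform tail bound I would proceed in two steps. By Markov's inequality $\pi(\{g>N\}) \leq C'/N$ for every $\pi \in \Pi(\mu)$, so the level sets are small uniformly in $\pi$. Writing $\int_{\mathcal{X}} g\,\mathds{1}_{\{g>N\}}\,d\pi = \sum_{k} \int_{\mathcal{X}} |h_k(x_k)|\,\mathds{1}_{\{g>N\}}\,d\pi$ and pushing $\mathds{1}_{\{g>N\}}\pi$ forward under the $k$-th projection yields a sub-probability measure $\nu^{\pi}_k \leq \mu_k$ on $\mathcal{X}_k$ of total mass $\pi(\{g>N\}) \leq C'/N$, for which $\int_{\mathcal{X}} |h_k(x_k)|\,\mathds{1}_{\{g>N\}}\,d\pi = \int_{\mathcal{X}_k}|h_k|\,d\nu^{\pi}_k$. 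For any threshold $M$ one has the elementary estimate $\int_{\mathcal{X}_k} |h_k|\, d\nu^{\pi}_k \leq \int_{\{|h_k|>M\}} |h_k|\,d\mu_k + M\,\nu^{\pi}_k(\mathcal{X}_k) \leq \int_{\{|h_k|>M\}}|h_k|\,d\mu_k + MC'/N$; choosing $M$ large (using $h_k \in L^1(\mu_k)$ and dominated convergence) and then $N$ large makes this arbitrarily small uniformly in $\pi$. Summing over $k$ gives $\sup_{\pi} \int_{\mathcal{X}} g\,\mathds{1}_{\{g>N\}}\,d\pi \to 0$, hence $\Vert c - c_N\Vert_L \to 0$, which is the claim.

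I expect the only genuine subtlety to be the off-diagonal coupling in the last step: since $\pi$ may couple the marginals arbitrarily, the mass a given $\pi$ places on $\{g>N\}$ could in principle align with the large values of $|h_k|$. The device that defuses this is precisely the domination $\nu^{\pi}_k \leq \mu_k$ together with the uniform smallness of the total mass coming from Markov's inequality: it reduces the coupling-dependent quantity to the absolute continuity of the single integral $\int |h_k|\,d\mu_k$, which is uniform over all sub-measures of bounded mass. Everything else, namely the continuity and boundedness of $c_N$ and the pointwise inequalities, is routine.
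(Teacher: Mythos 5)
Your proof is correct, and it shares the paper's overall skeleton (truncate at level $N$, then show the tail goes to zero uniformly over $\Pi(\mu)$), but the key estimate is carried out by a genuinely different mechanism. The paper disposes of the tail in one stroke with the pointwise inequality
\begin{equation*}
\max\Bigl\{\sum_{i=1}^{n}h_i - k,\,0\Bigr\} \;\leq\; \sum_{i=1}^{n}\max\Bigl\{h_i - \tfrac{k}{n},\,0\Bigr\},
\end{equation*}
which distributes the truncation threshold among the coordinates; since each summand depends on one variable only, the $\pi$-integral collapses to marginal integrals $\int_{\mathcal{X}_i}\max\{h_i - k/n,0\}\,d\mu_i$, uniform in $\pi$ by construction, and dominated convergence finishes the argument in two lines. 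You instead run a uniform-integrability argument: Markov's inequality gives $\pi(\{g>N\})\leq C'/N$ uniformly in $\pi$, and the projected sub-measures $\nu_k^{\pi}\leq\mu_k$ convert the coupling-dependent tail integral into a statement about absolute continuity of $\int|h_k|\,d\mu_k$ over sub-measures of small mass, closed by a two-parameter ($M$, then $N$) limit. Both arguments exploit the same structural facts — the dominating function splits across coordinates and coordinate functions integrate identically under every $\pi\in\Pi(\mu)$ — but the paper's threshold-splitting trick is shorter and avoids the pushforward bookkeeping, while your version makes explicit the uniform-integrability of the family $\{g\cdot\pi:\pi\in\Pi(\mu)\}$, which is the conceptual reason the supremum over couplings causes no harm, and correctly identifies and defuses the only genuine danger (mass concentrating where $|h_k|$ is large) via the domination $\nu_k^{\pi}\leq\mu_k$.
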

\begin{proof}
	Pick $g \in C_{L}(\mu)$, and let $\vert g \vert \leq h \in \mathcal{H}$. Let $k \in \mathbb{N}$, $g^{k} := \min\{k,g \}$ and $g^{k}_{k} := \max\{ g^{k},-k \}, g^{k}_{k} \in C_{b}(\mathcal{X})$. We need to show that $\Vert g - g^{k}_{k} \Vert_{L} \to 0$ as $k \to \infty$. First, notice that
	\begin{equation*}
	\Vert g - g^{k}_{k} \Vert_{L} \leq  \Vert g - g^{k} \Vert_{L} +  \Vert g^{k} - g^{k}_{k} \Vert_{L}.
	\end{equation*}
	Then, by Lebesgue dominated convergence theorem,
	\begin{equation*}
	\Vert g - g^{k} \Vert_{L} = \sup_{\pi \in \Pi(\mu)} \int_{\mathcal{X}} \max\{g-k,0\}d\pi \leq \sum_{i=1}^{n}\int_{\mathcal{X}_i} \max\left\{h_{i}-\dfrac{k}{n},0\right\} \, d\mu_{i} \to 0, \quad \mbox{as } k \to \infty,
	\end{equation*}
	\begin{equation*}
	\Vert g^{k} - g_{k}^{k} \Vert_{L} = \sup_{\pi \in \Pi(\mu)} \int_{\mathcal{X}} \max\{-g^{k}-k,0\}d\pi \leq \sum_{i=1}^{n} \int_{\mathcal{X}_i} \max\left\{h_{i}-\dfrac{k}{n},0\right\} \, d\mu_{i} \to 0, \quad \mbox{as } k \to \infty.
	\end{equation*} 
\end{proof}
Since $C_{b}(\mathcal{X})$ is dense in $C_{L}(\mu)$, there exists a sequence  $\{\rho_{n}\}_{n\in \mathbb{N}}$ of continuous functions converging uniformly to $c$  in the $\Vert \cdot \Vert_{L}$-topology. Also, by existence of the limits $\lim_{k} \int_{\mathcal{X}} \rho_{N}d\pi_{k}$ and $\lim_{n}\int_{\mathcal{X}} \rho_{n}d\pi_{K}$ for sufficiently large $N$ and $K$, we have that
\begin{equation*}
\lim_{k} \int_{\mathcal{X}} c \,d\pi_{k} = \lim_{k} \lim_{n} \int_{\mathcal{X}} \rho_{n}d\pi_{k} =\lim_{n} \lim_{k} \int_{\mathcal{X}} \rho_{n}d\pi_{k} =\lim_{n} \int_{\mathcal{X}} \rho_{n}d\pi = \int_{\mathcal{X}} c \,d\pi.
\end{equation*}
Thus, if and only if $\Pi_{W}(\mu)$ is non-empty, continuity and compactness guarantee the existence of a solution for the Kantorovich problem with additional linear constraints.

\subsection{Preliminary results for the proof of Theorem \ref{constrained Kantorovich duality}} \label{lemmas-appendix}
In this section, we recall two results, Lemma \ref{Kantorovich duality} and Lemma \ref{minimax} that are the key ingredients in proving Theorem \ref{proof-Kantorovich-constraints}.
	\begin{lem}[Kantorovich duality] \label{Kantorovich duality}
	Let $\mathcal{X}_{1}, \dots,  \mathcal{X}_{n}$ be Polish spaces, $\mathcal{X} = \mathcal{X}_{1} \times \cdots \times  \mathcal{X}_{n}$, $\mu = (\mu_{i} \in \mathcal{P}(\mathcal{X}_{i}))$ for $i = 1, \dots, n$, $f \in C_{L}(\mu)$. Then
	\begin{equation*}
	\inf_{\pi \in \Pi(\mu)}\int_{\mathcal{X}}f\,d\pi =   \sup_{h \leq f, \, h \in \mathcal{H}} \sum_{i=1}^{n}\int_{\mathcal{X}_i} h_{i} \, d\mu_{i}.
	\end{equation*}
\end{lem}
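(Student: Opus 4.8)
The plan is to prove the two inequalities separately; the trivial direction is weak duality, and the whole content lies in establishing the absence of a duality gap, which I would extract from the Fenchel--Rockafellar duality theorem.

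First I would record weak duality. For any $h=\sum_{i=1}^{n}h_{i}\in\mathcal{H}$ with $h\leq f$ and any $\pi\in\Pi(\mu)$, integrating the pointwise inequality against $\pi$ and using that $(\Pr_{i})_{\#}\pi=\mu_{i}$ gives
\[
\int_{\mathcal{X}}f\,d\pi\;\geq\;\int_{\mathcal{X}}h\,d\pi\;=\;\sum_{i=1}^{n}\int_{\mathcal{X}_{i}}h_{i}\,d\mu_{i}.
\]
Taking the infimum over $\pi\in\Pi(\mu)$ and the supremum over admissible $h$ yields $\inf_{\pi}\int f\,d\pi\geq\sup_{h\leq f}\sum_{i}\int h_{i}\,d\mu_{i}$, and the compactness and lower semicontinuity established in Appendix \ref{compactness-continuity} guarantee the infimum is in fact attained.

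For the reverse inequality I would first assume $f\in C_{b}(\mathcal{X})$. On $E=C_{b}(\mathcal{X})$ equipped with the supremum norm, define the convex functionals
\[
\Theta(u):=\begin{cases}0 & \text{if } u\geq -f \text{ on }\mathcal{X},\\ +\infty & \text{otherwise,}\end{cases}
\qquad
\Xi(u):=\begin{cases}\displaystyle\sum_{i=1}^{n}\int_{\mathcal{X}_{i}}h_{i}\,d\mu_{i} & \text{if } u=\textstyle\sum_{i=1}^{n}h_{i}\in\mathcal{H},\\ +\infty & \text{otherwise.}\end{cases}
\]
A direct computation of the Legendre transforms on the dual, identifying an element of $E^{*}$ with a (Radon) measure $\pi$, gives $\Theta^{*}(-\pi)=\int_{\mathcal{X}}f\,d\pi$ when $\pi\geq 0$ and $+\infty$ otherwise, since the constraint $u\geq -f$ is saturated at $u=-f$; and $\Xi^{*}(\pi)=0$ when $(\Pr_{i})_{\#}\pi=\mu_{i}$ for all $i$ and $+\infty$ otherwise, because $\Xi^{*}(\pi)=\sup_{h_{i}}\sum_{i}\int h_{i}\,d\big((\Pr_{i})_{\#}\pi-\mu_{i}\big)$. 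The qualification condition holds: taking the constant $u_{0}\equiv\lVert f\rVert_{\infty}+1$ (a constant lies in each $C_{L}(\mu_{i})$ since the $\mu_{i}$ are probability measures), one has $u_{0}\geq -f$ with slack, so $\Theta\equiv 0$ on a sup-norm ball about $u_{0}$ while $\Xi(u_{0})<\infty$. Fenchel--Rockafellar then yields
\[
\inf_{u\in E}\{\Theta(u)+\Xi(u)\}=\max_{\pi\in E^{*}}\{-\Theta^{*}(-\pi)-\Xi^{*}(\pi)\},
\]
whose left-hand side equals $-\sup_{h\leq f,\,h\in\mathcal{H}}\sum_{i}\int h_{i}\,d\mu_{i}$ (after the substitution $h_{i}\mapsto -h_{i}$) and whose right-hand side equals $-\inf_{\pi\in\Pi(\mu)}\int f\,d\pi$, giving the claimed equality for bounded $f$.

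The main obstacle is twofold: the topological dual of $C_{b}(\mathcal{X})$ for a non-compact Polish space is larger than the space of countably additive measures, and a general $f\in C_{L}(\mu)$ need not be bounded. For the first point I would, as in Villani \cite{villani2008optimal}, either pass through compact spaces and extend by tightness (Appendix \ref{compactness-continuity}) or argue directly that the conjugate $\Theta^{*}$ forces the optimizer to be a genuine nonnegative Radon measure. For the second point I would remove the boundedness of $f$ by approximation: using the density of $C_{b}(\mathcal{X})$ in $C_{L}(\mu)$ established above, choose $f^{k}\in C_{b}(\mathcal{X})$ with $\lVert f-f^{k}\rVert_{L}\to 0$; since $\big|\int f\,d\pi-\int f^{k}\,d\pi\big|\leq\lVert f-f^{k}\rVert_{L}$ uniformly over $\pi\in\Pi(\mu)$, the primal values converge, and a matching estimate on the admissible $h$ transfers convergence to the dual side, so the duality for $f^{k}$ passes to the limit and yields the full statement.
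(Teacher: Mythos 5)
Your proposal takes a genuinely different route from the paper: you run the classical Fenchel--Rockafellar argument on $C_{b}(\mathcal{X})$ (Villani's proof of Kantorovich duality \cite{villani2008optimal}, adapted to $n$ marginals) and then approximate, whereas the paper follows Zaev \cite{zaev2015monge}, extending the functional $T(h)=\sum_{i}\int_{\mathcal{X}_i} h_{i}\,d\mu_{i}$ from $\mathcal{H}$ by Hahn--Banach under the sublinear majorant $U(c)=\inf\{T(h):h\geq c\}$, and then identifying the extensions with transport plans via Stone--\v{C}ech compactification and Riesz representation. Your weak duality step, the two conjugate computations, the well-definedness of $\Xi$, and the qualification condition at the constant $u_{0}$ are all correct. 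The non-compactness issue you flag is real, and both fixes you mention are standard; note only that it is the finiteness of $\Xi^{*}$ (the marginal constraints) combined with tightness of the $\mu_{i}$, rather than $\Theta^{*}$ alone, that upgrades a finitely additive nonnegative optimizer to a countably additive measure.

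The genuine gap is in your final approximation step. Convergence $\Vert f-f^{k}\Vert_{L}\to0$ does control the primal side uniformly over $\Pi(\mu)$, but there is no ``matching estimate'' on the dual side in terms of $\Vert\cdot\Vert_{L}$. Concretely, given $h\in\mathcal{H}$ with $h\leq f^{k}$, to produce a competitor admissible for $f$ you must subtract from $h$ a function that simultaneously (i) dominates $(f^{k}-f)^{+}$ pointwise, (ii) lies in $\mathcal{H}$ (otherwise the difference is no longer of the form $\sum_{i}\tilde h_{i}(x_i)$), and (iii) has small value under $T$. Smallness of $\Vert f-f^{k}\Vert_{L}=\sup_{\pi\in\Pi(\mu)}\int_{\mathcal{X}}\vert f-f^{k}\vert\,d\pi$ produces no such function; what is needed is smallness in the stronger seminorm $\Vert c\Vert_{D}:=\inf\{T(g):g\in\mathcal{H},\,g\geq\vert c\vert\}$, which is precisely the seminorm the paper introduces inside its own proof. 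The step can be repaired: take $f^{k}:=\max\{\min\{f,k\},-k\}$ and use the defining property of $C_{L}(\mu)$ that $\vert f\vert\leq\sum_{i}h_{i}^{*}$ with $h_{i}^{*}\in C_{L}(\mu_{i})$; then $\vert f-f^{k}\vert\leq\sum_{i}\max\{h_{i}^{*}-k/n,0\}\in\mathcal{H}$, dominated convergence gives $\Vert f-f^{k}\Vert_{D}\to0$, and both sides of the duality pass to the limit by shifting admissible $h$ down by this dominating function. As written, however, the appeal to $\Vert\cdot\Vert_{L}$ alone leaves the dual side unjustified; this transfer is exactly where the paper's proof does its hard work, namely the density of $C_{b}(\mathcal{X})$ in $C_{L}(\mu)$ with respect to $\Vert\cdot\Vert_{D}$.
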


\begin{proof}
	Let $T: \mathcal{H} \to\mathbb{R}$ be a linear functional defined by
	\begin{equation*}
	T(h) = \sum_{i=1}^{n}\int_{\mathcal{X}_{i}} h_{i}d\mu_{i}.
	\end{equation*}
	It is positive and continuous with respect to the seminorm $\Vert \cdot \Vert_{L}$. Let $ U: C_{L}(\mu) \to \mathbb{R}$ be another functional defined by
	\begin{equation} \label{U}
	U(c) = \inf_{h \in \mathcal{H}}\{T(h) : h\geq c\},
	\end{equation}
	where $\mathcal{H} \subset C_{L}(\mu)$. Then, $U(h) = T(h)$ for $h \in \mathcal{H}$.
	In order to apply the Hahn-Banach theorem, it takes to prove that $U$ is subadditive and positively homogeneous. For subadditivity, notice that
	\begin{equation*}
	\begin{split}
	U(c+g)  = & \inf_{h \in \mathcal{H}}\{ T(h):h \geq c+g \}   \\
	\leq & \inf_{h \in \mathcal{H}}\{ T(h):h \geq c \} +  \inf_{h \in \mathcal{H}}\{ T(h):h \geq g \} = U(c) + U(g),
	\end{split}
	\end{equation*}
	since lexicographical ordering holds: $h_1 + h_2 > c+g$ for any $h_1 > c, h_2 > g$. For positive homogeneity, let $ a \in \mathbb{R}^{+}$: then, 
	\begin{equation*}
	\begin{split}
	U(a c) = & \inf_{h \in \mathcal{H}}\{ T(h):h \geq a c \} \\
	= & \inf_{h \in \mathcal{H}}\{ T(a h):h \geq c \} = a \inf_{h \in \mathcal{H}}\{ T(h):h \geq c \} = aU(c).
	\end{split}
	\end{equation*}
	Since $T \leq U$ on $\mathcal{H}$, we can apply Hahn-Banach theorem and extend $T$ from $\mathcal{H}$ to the whole space $C_{L}(\mu)$. Let us denote such extension as $P$ and prove that $P \leq U$ implies positivity of $P$. 
	Assume $P$ is non-positive. Hence there exists a function $c \in C_L(\mu)$ such that $c \geq 0$ and $P(c) < 0$. However, the argument
	\begin{equation*}
	0 < P(-c) \leq U(-c) = \inf_{h \in \mathcal{H}}\{T(h):h \geq -c   \} \leq 0
	\end{equation*}
	leads us to contradiction.
	Let us define a new linear operator $T_f : \{h +t f:t\in\mathbb{R},h \in \mathcal{H}\} \to \mathbb{R}$ such that it coincides with $T$ on $\mathcal{H}:T_{f}\vert_{\mathcal{H}}=T$ and coincides with $U$ at the point $-f: T_{f}(-f) = U(-f)$. By linearity of $T_f$, it follows that $T_{f}(tf) = tU(f)$. Now, notice that for $t \in \mathbb{R}$, we have
	\begin{equation*}
	\begin{split}
	U(t c) = &\inf_{h \in \mathcal{H}}\{T(h):h \geq t c \} = \inf_{h \in \mathcal{H}}\{T(h): -h\leq -t c \} \\
	= &\inf_{h \in \mathcal{H}}\{T(-h):h\leq -t c \} = -\sup_{h \in \mathcal{H}}\{T(h):h\leq -t c \} \\
	\geq& -\inf_{h \in \mathcal{H}}\{T(h):h\geq -t c \} = -U(-t c) = tU(c),
	\end{split}
	\end{equation*}
	where the last inequality follows from the positivity of $T$.
	Thus, $tU(f) \leq U(t f)$ and $T_f \leq U$ everywhere on its domain. By using the Hahn-Banach theorem, we can then extend $T_f$ to the linear functional $P_f : C_{L}(\mu) \to \mathbb{R}$ such that $P_{f}\vert_{\mathcal{H}} = T_f, P_f(-f) = U(-f), P_f \leq U.$ By the construction of linear extensions, we have
	\begin{equation*}
	\sup_{P}P(-f) \leq \inf_{h \in \mathcal{H}}\{T(h):h\geq -f  \}.
	\end{equation*}
	Equivalently,
	\begin{equation*}
	\inf_{P}P(f) \geq \sup_{h \in \mathcal{H}}\{T(h):h \leq f  \}.
	\end{equation*}
	However, using the equality $P_f(-f) = U(-f)$, linearity properties of $T$ and $P$, and the fact that $P_f$ is an extension of $T$ and is dominated by $U$, we finally have
	\begin{equation*}
	\inf_{P}P(f) = \sup_{h \in \mathcal{H}}\{T(h):h \leq f  \}.
	\end{equation*}
	The last equality differs from the desired duality statement by the fact that the infimum is taken over the family of linear operators $P$, which are not measures a priori. Therefore in the remaining part of the proof we  show that actually these functionals are transport plans. Define for any $P$ its restriction $P\vert_{C_{b}} \in (C_{b}(\mathcal{X}))^{*}$ on the dual space of the space of bounded continuous functions on $\mathcal{X}$. 
	As $\mathcal{X}$ is not assumed compact, we need the following result to establish a connection between bounded continuous functions and continuous functions on the Stone-\v{C}ech compactification:

Namely, let $(\mathcal{X},d)$ be a metric space. There exists a compact Hausdorff space $Y$ and a map $T:\mathcal{X} \to \mathcal{Y}$ such that
\begin{itemize}
\item $T$ is a homeomorphism from $\mathcal{X}$ onto $T(\mathcal{X})$,
\item $T(\mathcal{X})$ is dense in $Y$,
\item for every $h \in C_b (\mathcal{X})$ there exists a unique function $g \in C(\mathcal{Y})$ extending $h$ through $T$.
\end{itemize}

The pair $(\mathcal{Y},T)$ is essentially unique and called the \textit{Stone-\v{C}ech compactification }of $\mathcal{X}$. In the following, we  denote it as $\beta \mathcal{X}$.  Since every function $h \in C_b(\mathcal{X})$ can be extended uniquely to $\beta \mathcal{X}$ as a continuous function, there is a natural linear isometry between $C_b(\mathcal{X})$ and $C(\beta \mathcal{X})$. Now, by the Riesz-Markov-Kakutani representation theorem, it follows that $$(C_{b}(\mathcal{X}))^{*} \simeq (C(\beta \mathcal{X}))^{*} \simeq \mathcal{M}(\beta \mathcal{X}),$$ where $\mathcal{M}(\beta \mathcal{X})$ is the space of Radon measures on $\beta \mathcal{X}$. For simplicity, denote as $\pi$ the associated Borel measure on $\beta \mathcal{X}$. By Theorem 2.14 in Rudin \cite{rudin2006real}, since $\Vert P\vert_{C_{b}} \Vert = 1$,  $\pi$ is a probability measure. The next step is to show that the restriction of $\pi$ on $\mathcal{X}$, defined by $\pi\vert_{\mathcal{X}}(A) := \pi(\mathcal{X} \cap A)$, for all $\pi$-measurable sets $A \in \beta \mathcal{X}$, is also a probability a measure. Consider the projection $\Pr_{i} : \mathcal{X} \to \mathcal{X}_{i}$ pushing forward the measure $\pi \vert_{\mathcal{X}}$ to some measure on $\mathcal{X}_{i}$. We have that:
\begin{equation*}
(({\Pr}_{i})_{\#}\pi\vert_{\mathcal{X}})(A_{i}) = \pi\vert_{\mathcal{X}}({\Pr}_{i}^{-1}(A_{i})) = \pi({\Pr}_{i}^{-1}(A_{i})) = \int_{\mathcal{X}_i}\mathds{1}_{A_i}\, d\mu_{i} = \mu_{i}(A_i),
\end{equation*}
for any $\mu_{i}$-measurable set $A_{i}$. Notice that the second-to-last inequality follows from the fact that, for any function $h_{i} \in C_{L}(\mu_{i}),$ which is integrable with respect to $\mu_{i}$, $\int_{\mathcal{X}}h_{i}\, d\pi = \int_{\mathcal{X}_{i}}h_{i}\, d\mu_{i}$. In particular, $\pi \vert_{\mathcal{X}}(\mathcal{X}) = \pi \vert_{\mathcal{X}}({\Pr}_{i}^{-1}(\mathcal{X}_{i})) = \mu_{i}(\mathcal{X}_{i}) = 1$. Thus, we obtained that $P\vert_{C_{b}} \simeq \pi$ is a probability measure on $\mathcal{X}$ with marginals $\mu_{i}$ (that is, a transport plan). The next and final objective is to show that $P$ itself is also a measure. Define the following seminorn on $C_{L}(\mu)$:
\begin{equation*}
\Vert c \Vert_{D} := \inf_{h \in \mathcal{H}}\{T(h) : h \geq \vert c \vert \}.
\end{equation*}
Notice that $\Vert c \Vert_{D} = U(\vert c \vert)$, so it is easy to show that is a well-defined seminorm. Furthermore, the induced topology is stronger than the $\Vert \cdot \Vert_{L}$-topology:
\begin{equation*}
\inf_{h \in \mathcal{H}}\{T(h) : h \geq \vert c \vert \} \geq \sup_{\pi \in \Pi(\mu)}\int_{\mathcal{X}}\vert c \vert \, d\pi = \Vert c \Vert_{L}, 
\end{equation*}
$P$ is continuous with respect to $\Vert c \Vert_{D}$: 
\begin{equation*}
P(\vert c \vert) \leq U(\vert c \vert) = \Vert c \Vert_{D},
\end{equation*}
and $P\vert_{C_{b}}$ is also continuous. To complete the proof of Lemma \ref{Kantorovich duality}, we finally need to show that $C_{b}(\mathcal{X})$ is dense in $C_{L}(\mu)$ with respect to the seminorm  $\Vert c \Vert_{D}$.

To prove this claim, pick $g \in C_{L}(\mu)$, and let $\vert g \vert \leq h \in \mathcal{H}$. Let $k \in \mathbb{N}$, $g^{k} := \min\{k,g \}$ and $g^{k}_{k} := \max\{ g^{k},-k \}, g^{k}_{k} \in C_{b}(\mathcal{X})$. We need to show that $\Vert g - g^{k}_{k} \Vert_{D} \to 0$ as $k \to \infty$. First, notice that
	\begin{equation*}
	\Vert g - g^{k}_{k} \Vert_{D} \leq  \Vert g - g^{k} \Vert_{D} +  \Vert g^{k} - g^{k}_{k} \Vert_{D}.
	\end{equation*}
	Then, by Lebesgue dominated convergence theorem,
	\begin{equation*}
	\Vert g - g^{k} \Vert_{D} = U(\max\{g-k,0\}) \leq U\left( \sum_{i=1}^{n}\max\left\{h_{i}-\dfrac{k}{n},0\right\}  \right) \to 0, \quad \mbox{as } k \to \infty,
	\end{equation*}
	\begin{equation*}
	\Vert g^{k} - g_{k}^{k} \Vert_{D} = U(\max\{-g^{k}-k,0\}) \leq U\left( \sum_{i=1}^{n}\max\left\{h_{i}-\dfrac{k}{n},0\right\}  \right) \to 0, \quad \mbox{as } k \to \infty,
	\end{equation*} 
	with $U$ defined as in (\ref{U}). This shows that $C_{b}(\mathcal{X})$ is dense in $C_{L}(\mu)$ with respect to the seminorm  $\Vert c \Vert_{D}$.
		
Also, $P\vert_{C_{b}}$ can be seen as a linear operator on $C_{L}(\mu)$, so $P\vert_{C_{b}}$ and $P$ are both continuous linear functionals on $ (C_{L}(\mu),\Vert \cdot \Vert_{D})$ and coincide on $C_{b}(\mathcal{X})$. Using the fact that $C_{b}(\mathcal{X})$ is dense in $ (C_{L}(\mu),\Vert \cdot \Vert_{D})$, we obtain that $P\vert_{C_{b}}$ extended on $C_{L}(\mu)$ and $P$ also coincide on the whole $C_{L}(\mu)$:
\begin{equation*}
P \simeq \pi \in \Pi(\mu).
\end{equation*}
Finally, by noticing that $P$ is a transport plan with marginals $\mu_{i}$, the proof of Lemma \ref{Kantorovich duality} is complete.
\end{proof}
\begin{lem}[Sion's minimax theorem] \label{minimax}
Let $K$ be a compact convex subset of a Hausdorff topological vector space, $Y$ be a convex subset of an arbitrary vector space, and $h$ be a real-valued function on $K \times Y$, which is lower semi-continuous and convex on $K$ for each fixed $y \in Y$, and concave on $Y$. Then 
\begin{equation*}
\min_{x \in K} \sup_{y \in Y} h(x,y) = \sup_{y \in Y}\min_{x \in K}h(x,y).
\end{equation*}
\end{lem}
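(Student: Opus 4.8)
The plan is to establish the two inequalities between $\min_{x\in K}\sup_{y\in Y}h(x,y)$ and $\sup_{y\in Y}\min_{x\in K}h(x,y)$ separately. First note that the minima are genuinely attained: for fixed $y$ the map $x\mapsto h(x,y)$ is lower semi-continuous on the compact set $K$, and $x\mapsto\sup_{y}h(x,y)$, being a supremum of lower semi-continuous functions, is itself lower semi-continuous, so both attain their infimum on $K$. The inequality
\[
\sup_{y\in Y}\min_{x\in K}h(x,y)\;\le\;\min_{x\in K}\sup_{y\in Y}h(x,y)
\]
is automatic (``weak duality''): for any $x_0,y_0$ one has $\min_x h(x,y_0)\le h(x_0,y_0)\le\sup_y h(x_0,y)$, and taking the supremum over $y_0$ on the left and the infimum over $x_0$ on the right preserves the inequality. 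The entire content lies in the reverse inequality.

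Second, I would argue the reverse inequality by contradiction. Write $v:=\sup_{y}\min_{x}h(x,y)$ and $w:=\min_{x}\sup_{y}h(x,y)$, assume $v<w$, and fix a real $c$ with $v<c<w$. Since $c<w$, for every $x\in K$ there is some $y$ with $h(x,y)>c$; because $h(\cdot,y)$ is lower semi-continuous the superlevel sets $U_y:=\{x\in K:h(x,y)>c\}$ are open, and they cover $K$. Compactness yields finitely many $y_1,\dots,y_n$ with $K=\bigcup_{j=1}^{n}U_{y_j}$, that is,
\[
\min_{x\in K}\max_{1\le j\le n}h(x,y_j)>c.
\]
It therefore suffices to prove the finite minimax inequality $\min_{x}\max_{1\le j\le n}h(x,y_j)\le v$, since this contradicts the display above together with $c>v$.

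Third, this finite minimax inequality is exactly where convexity in $x$ and concavity in $y$ enter; I would prove it by induction on $n$, the base case $n=2$ being the essential one. For the two-point step, given $y_0,y_1\in Y$ and $y_t:=(1-t)y_0+ty_1\in Y$, the claim is that if $\min_x h(\cdot,y_t)\le c$ for every $t\in[0,1]$, then $\min_x\max\{h(\cdot,y_0),h(\cdot,y_1)\}\le c$. Consider $A_t:=\{x\in K:h(x,y_t)\le c\}$; each $A_t$ is closed (lower semi-continuity), convex (convexity of $h(\cdot,y_t)$), hence connected, and nonempty by assumption. Concavity in $y$ gives $h(x,y_t)\ge\min\{h(x,y_0),h(x,y_1)\}$, so that $A_t\subseteq A_0\cup A_1$. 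Assuming toward a contradiction that $A_0\cap A_1=\varnothing$, the sets $A_0,A_1$ are disjoint and closed, so each connected $A_t$ must lie entirely in $A_0$ or entirely in $A_1$. Setting $T_0:=\{t:A_t\subseteq A_0\}$ and $T_1:=\{t:A_t\subseteq A_1\}$, these partition $[0,1]$ and are nonempty ($0\in T_0$, $1\in T_1$); once one shows they are closed, connectedness of $[0,1]$ forces the contradiction, whence $A_0\cap A_1\neq\varnothing$. The contrapositive is the merging step: if $\min_x\max\{h(\cdot,y_0),h(\cdot,y_1)\}>c$, then some $y_{t^\ast}$ already satisfies $\min_x h(\cdot,y_{t^\ast})>c$, which feeds the induction and ultimately produces a single $y\in\mathrm{conv}\{y_1,\dots,y_n\}\subseteq Y$ with $\min_x h(\cdot,y)>c$, contradicting $c>v$.

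I expect the main obstacle to be the closedness of $T_0$ and $T_1$ in the two-point step, i.e.\ a semi-continuity property of the level-set map $t\mapsto A_t$: this is the one place where lower semi-continuity in $x$ and concavity in $y$ must be combined with care, since a concave profile $t\mapsto h(x,y_t)$ may dip below $c$ only near an endpoint, and this behaviour has to be controlled uniformly. A secondary nuisance is organizing the induction from $n=2$ to general $n$; a clean alternative is to replace $Y$ by the simplex $\Delta_n$ through $(x,\lambda)\mapsto h\bigl(x,\sum_{j}\lambda_j y_j\bigr)$, which remains lower semi-continuous and convex in $x$ and is now concave in $\lambda$ on the compact convex set $\Delta_n$, and then to apply the same connectedness machinery along the edges of $\Delta_n$.
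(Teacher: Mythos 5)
Your overall strategy is sound and is essentially Komiya's elementary proof of Sion's theorem: weak duality, reduction to finitely many $y_j$ by compactness and lower semicontinuity in $x$, and a two-point connectedness lemma fed into an induction. (Note that the paper itself gives no proof at all: it defers entirely to the book of Adams and Hedberg, so there is no "paper argument" for you to match; the question is only whether your argument stands on its own.) Your weak duality step, the open-cover reduction, and the dichotomy that each connected $A_t$ lies entirely in $A_0$ or in $A_1$ are all correct as written.

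However, the step you yourself flag as "the main obstacle" --- closedness of $T_0$ and $T_1$ --- is a genuine gap, and it is not a routine verification: it is the entire content of the two-point lemma, and with your single threshold $c$ the natural limiting argument fails in two distinct ways. First, at an interior limit $t_n\to t^{\ast}\in(0,1)$ with $t_n\in T_0$, the only continuity available is that of the concave function $g(s)=h(x^{\ast},y_s)$ for a fixed $x^{\ast}\in A_{t^{\ast}}$ (concave functions on a segment are continuous on its interior; no semicontinuity in $y$ can be invoked since $Y$ carries no topology). Continuity gives $g(t_n)\to g(t^{\ast})\le c$, but when $g(t^{\ast})=c$ exactly this never yields $g(t_n)\le c$, so you cannot place $x^{\ast}$ in $A_{t_n}\subseteq A_0$ and the argument stalls. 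The standard repair (Komiya's two-level trick) is to fix a second level $\beta$ with $c<\beta<\min_x\max\{h(x,y_0),h(x,y_1)\}$, keep the moving sets at level $c$ but define anchor sets $B_i=\{x\in K: h(x,y_i)\le\beta\}$, and run the dichotomy for $\{t:A_t\subseteq B_0\}$ and $\{t:A_t\subseteq B_1\}$; then $g(t^{\ast})\le c<\beta$ does give $g(t_n)<\beta$ eventually. Second, even after this fix, sequences $t_n\to 1$ (or $t_n\to 0$) are not covered: a concave profile can jump down at an endpoint (e.g.\ $g\equiv\beta+1$ on $[0,1)$ and $g(1)=c$ is concave), so interior continuity says nothing there, exactly as you suspected. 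One then needs a separate argument, for instance: pick $x_n\in A_{t_n}\subseteq B_0$, use compactness of $K$ to pass to a limit $\bar x\in B_0$, use the quantitative concavity $c\ge h(x_n,y_{t_n})\ge(1-t_n)h(x_n,y_0)+t_n h(x_n,y_1)$ together with the finiteness of $\min_x h(x,y_0)$ to get $\limsup_n h(x_n,y_1)\le c$, and then lower semicontinuity in $x$ to conclude $h(\bar x,y_1)\le c<\beta$, i.e.\ $\bar x\in B_0\cap B_1=\emptyset$, a contradiction. This last step uses full concavity (not mere quasi-concavity), which is precisely what compensates for the absence of upper semicontinuity in $y$ in the present statement. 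These two repairs are the actual mathematical core of the proof; without them your argument is incomplete at its central step.
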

\begin{proof}
See Adams and Hedberg \cite{adams2012function}.
\end{proof}
\subsection{Proof of Theorem \ref{constrained Kantorovich duality}}  \label{proof-Kantorovich-constraints} 
\begin{proof}	The first inequality is easy to prove:
	\begin{equation*}
	\begin{split}
	\inf_{\pi \in \Pi_{W}(\mu)}\int_{\mathcal{X}} f \, d\pi & \geq \inf_{\pi \in \Pi_{W}(\mu)} \sup_{\substack{h + w \leq f \\ h \in \mathcal{H}, \, w \in W}} \int_{\mathcal{X}} (h+w)d\pi  \\
	& = \inf_{\pi \in \Pi_{W}(\mu)} \sup_{\substack{h + w \leq f \\ h \in \mathcal{H}, \, w \in W}} \sum_{i=1}^{n} \int_{\mathcal{X}_i} h_{i}\, d\mu_{i} = \sup_{\substack{h + w \leq f \\ h \in \mathcal{H}, \, w \in W}} \sum_{i=1}^{n} \int_{\mathcal{X}_i} h_{i}\, d\mu_{i}.
	\end{split}
	\end{equation*}
	The second inequality is much more difficult. First, notice that
	\begin{equation} \label{step_before_KD}
	\sup_{\substack{h + w \leq f \\ h \in \mathcal{H}, \, w \in W}} \sum_{i=1}^{n}\int_{\mathcal{X}_i} h_{i}\, d\mu_{i} = \sup_{w \in W} \sup_{\substack{h \leq (f - w) \\ h \in \mathcal{H}}}\sum_{i=1}^{n}\int_{\mathcal{X}_i} h_{i} \, d\mu_{i}.
	\end{equation}
Applying Lemma \ref{Kantorovich duality} for the function $(f-w) \in C_{L}(\mu)$ in (\ref{step_before_KD}), we have: 
\begin{equation*}
\sup_{w \in W} \sup_{\substack{h \leq (f - w) \\ h \in \mathcal{H}}}\sum_{i=1}^{n}\int_{\mathcal{X}_i} h_{i} \, d\mu_{i} = \sup_{w \in W}\inf_{\pi \in \Pi(\mu)}\int_{\mathcal{X}}(f-w)d\pi.
\end{equation*}
By linearity and continuity of $\int_{\mathcal{X}}(f-w)d\pi$, in order to safely interchange supremum and infimum, we can then apply Lemma \ref{minimax} for $ K = \Pi(\mu)$, $Y = W$ and $h(x,y) = \int_{X} (f-w) \, d\pi$, obtaining:
\begin{equation*}
\sup_{w \in W}\inf_{\pi \in \Pi(\mu)}\int_{\mathcal{X}}(f-w)d\pi = \inf_{\pi \in \Pi(\mu)} \sup_{w \in W} \int_{\mathcal{X}}(f-w)d\pi.
\end{equation*}
Finally, notice that if $\pi \notin \Pi_{W}(\mu)$, then there exists $w_1 \in W$ such that $\int_{\mathcal{X}}w_1 d\pi < 0$. Then, by choosing $w = \alpha w_1 $, $\alpha \to +\infty$, this implies $\sup_{w \in W} \int_{\mathcal{X}}(f-w) = +\infty$. Thus, we can conclude:
\begin{equation*}
\inf_{\pi \in \Pi(\mu)} \sup_{w \in W} \int_{\mathcal{X}}(f-w)d\pi = \inf_{\pi \in \Pi_W(\mu)} \int_{\mathcal{X}}f\,d\pi,
\end{equation*}
	which is what we need to complete the proof of Theorem \ref{constrained Kantorovich duality}.
\end{proof}

\section{Further numerical results} \label{further_numerics}
The objective of this section is twofold: first, we aim to provide a sensitivity analysis of Algorithm \ref{OT via NN: AGD} with respect to the choice of the penalization coefficient $\gamma$, the training batch size and the width (hidden dimension) of the hidden layers; second, we will show instances of the (primal) optimizers  obtained via Eq. (\ref{joint_distribution}). In order to focus on these aspects and ease the visualization, we will study a simple 2D example with one constraint. 

More precisely, let us consider a basket option with strike $K=0$ on two assets, $S_{A1} \sim \mathcal{N}(0,2), S_{A2} \sim \mathcal{N}(1,2)$, whose price $p_{A1}$ is assumed to be known. Of course, from a theoretical point of view, this additional information should constrain the upper and lower bounds on the basket option to match exactly the given price.  

In Figures \ref{fig:sens_gamma}-\ref{fig:sens_neurons}, we report the relative difference between numerical and exact values (denoted as $\Delta\%$Bounds) for different penalization coefficients $\gamma$, batch sizes and numbers of neurons per hidden layer, respectively. Overall, we found that $\gamma=100$ worked satisfactorily well across all experiments while being less prone to instability during training than, say, $\gamma=400$ or $1000$. Similarly, a batch size equal to $2^8$ and a number of hidden neurons equal to $128$ seemed to work quite well for our purposes.\footnote{Of course, we would like to remark that these conclusions might differ for much higher dimensional settings (e.g., $d > 10$), where more fine-tuning of the hyperparameters could be required.}

\begin{figure}[!h]
	\centering
 \includegraphics[width=0.8\textwidth,scale = 0.6]{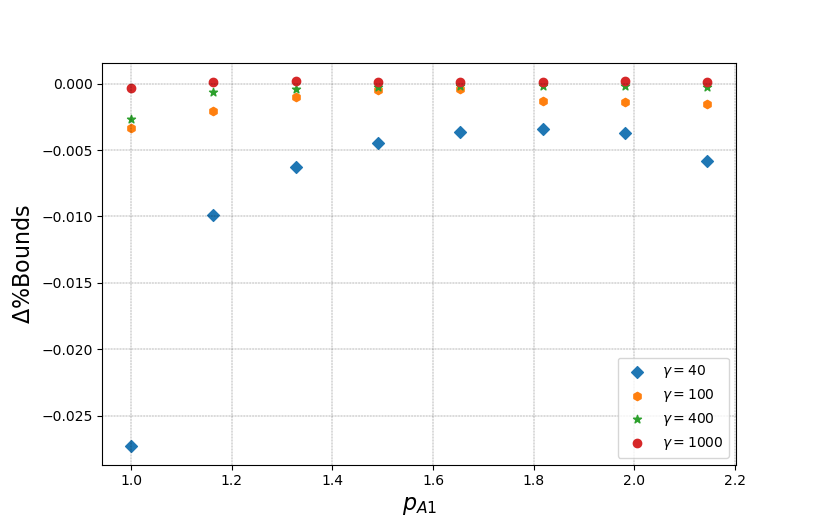} 
	\caption{2-basket option with one constraint for different penalization coefficients $\gamma$. The batch size is set as $2^8$ and the number of hidden neurons as 128.} \label{fig:sens_gamma}
\end{figure}

\begin{figure}[!h]
	\centering
	\includegraphics[width=0.8\textwidth,scale = 0.6]{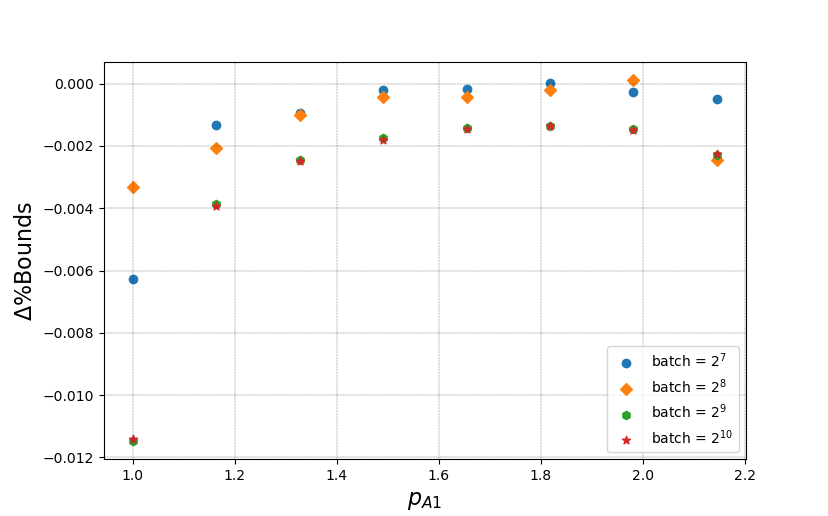} 
	\caption{2-basket option with one constraint for different training batch sizes. $\gamma$ is set as $100$ and the number of hidden neurons as 128.} \label{fig:sens_batch}
\end{figure}

\begin{figure}[!h]
	\centering
	\includegraphics[width=0.8\textwidth,scale = 0.6]{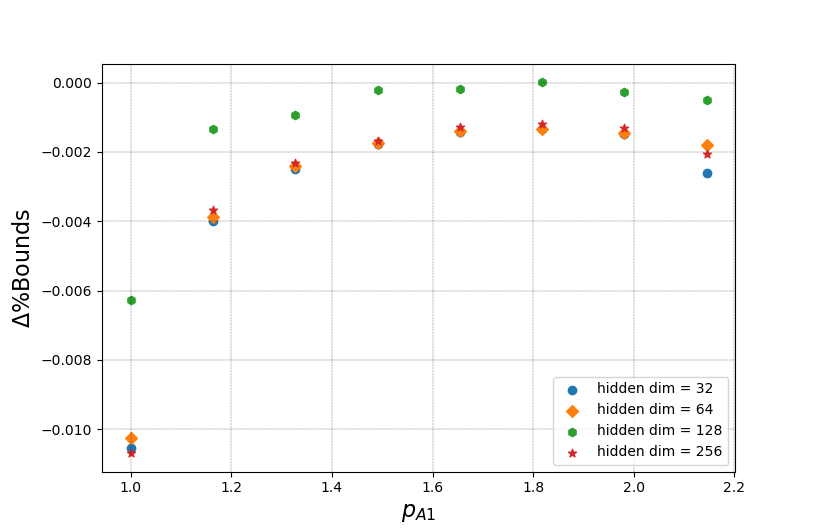} 
	\caption{2-basket option with one constraint for different numbers of hidden neurons. $\gamma$ is set as $100$ and the batch size as $2^{8}$.} \label{fig:sens_neurons}
\end{figure}

To conclude, in Figure \ref{fig:copulas}, we show some instances of the optimizers (couplings) that are obtained by utilizing the formula in Eq. (\ref{joint_distribution}) and the dual solution. The numerical optimal measures are displayed (at the copula level) for four values of $p_{A1}$. 

Figures \ref{fig:cop1} and \ref{fig:cop4} correspond to the extreme cases. In particular, Figure \ref{fig:cop1} corresponds to the copula that minimizes the price of the basket payoff $(p_{A1} = 1)$. Notice that, for this problem, the solution is not unique. In fact, one could have rightfully expected an antimonotonic copula. However, the algorithm finds an alternative minimizing dependence with negative correlation $\rho = -0.92$. 

On the other hand, Figure \ref{fig:cop4} corresponds to the copula that maximizes the price of the payoff $(p_{A1} = 2.145)$. Similarly, instead of a comonotonic structure, here the algorithm finds a different maximizing dependence with positive correlation $\rho = 0.63$.  

Finally, Figures \ref{fig:cop2} and \ref{fig:cop3} represent two intermediate situations. For $p_{A1} = 1.4$, the optimal measure between the marginals shows a mild negative correlation $(\rho = -0.29)$; for $p_{A1} = 1.8$, we are close to the independent coupling $(\rho = 0.02)$.

\begin{figure}
\centering
\begin{subfigure}[b]{0.475\textwidth}
	\centering
	\includegraphics[width=\textwidth]{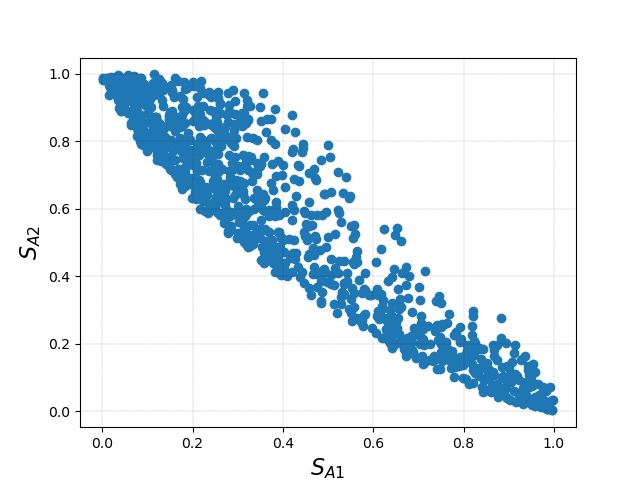}
	\caption[]{$p_{A1} = 1$}
	\label{fig:cop1}
\end{subfigure}
\hfill
\begin{subfigure}[b]{0.475\textwidth}  
	\centering 
	\includegraphics[width=\textwidth]{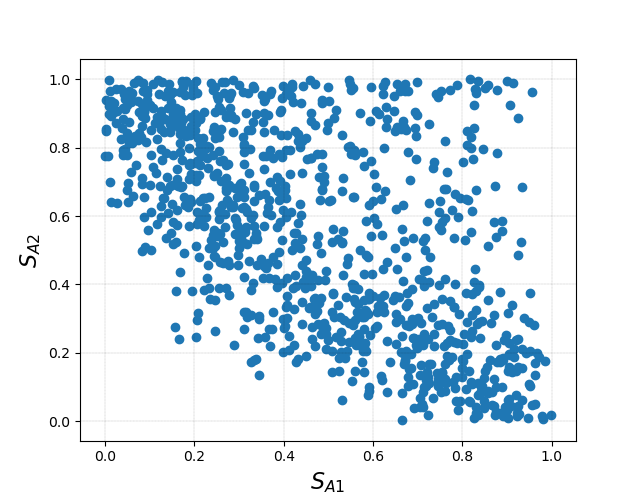}
		\caption[]{$p_{A1} = 1.4$}
	\label{fig:cop2}
\end{subfigure}
\vskip\baselineskip
\begin{subfigure}[b]{0.475\textwidth}   
	\centering 
	\includegraphics[width=\textwidth]{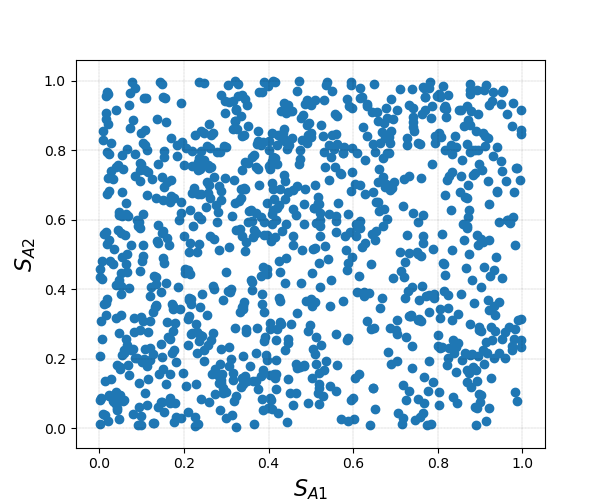}
		\caption[]{$p_{A1} = 1.8$}  
	\label{fig:cop3}
\end{subfigure}
\hfill
\begin{subfigure}[b]{0.475\textwidth}   
	\centering 
	\includegraphics[width=\textwidth]{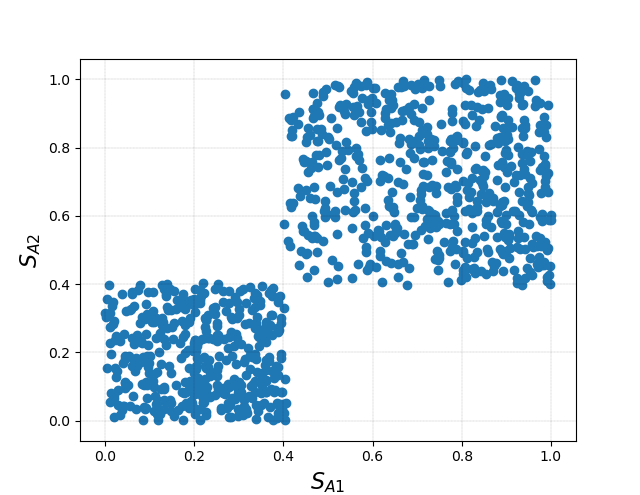}
		\caption[]{$p_{A1} = 2.145$}
	\label{fig:cop4}
\end{subfigure}
	\caption[]{Copula between $S_{A1}$ and $S_{A2}$ for different values of $p_{A1}$.}
\label{fig:copulas}
\end{figure}

\end{appendices}	

\end{document}